%% file: Risk_Measures_on_Lipschitz_Spaces.tex
\documentclass[a4paper,11pt,oneside]{article}

\input{preamble}

\begin{document}

\title{Risk Measures on Lipschitz Spaces}
\author{
Henrik Karlholm\thanks{Federal University of Rio Grande do Sul, Porto Alegre, Brazil}
\and
Marlon Moresco\footnotemark[1]
\and
Marcelo Righi\footnotemark[1]}
\date{
}
\maketitle

\begin{abstract}
This paper develops a theory of monetary risk measures on metric state spaces. We propose the space of Lipschitz functions vanishing at a reference state as a natural domain for financial positions. The associated Lipschitz-free space provides its canonical predual, linking anchored Lipschitz payoffs to transport-based dual variables interpreted as redistributions of mass around the benchmark. Since the domain lacks constants and need not be a Banach lattice under the Lipschitz norm, standard cash-additive methods do not apply directly. We address this by using additivity along benchmark-deviation instruments and derive dual representations for convex and coherent risk measures. The framework covers temporal cash flows, path-dependent payoffs, network risk, and model uncertainty.

\textbf{Keywords}: Risk measures; Lipschitz-free spaces; Robustness; Wasserstein metric; Polish spaces.
\end{abstract}

\section{Introduction}

The theory of monetary risk measures has, since the seminal contribution of
\citet{artzner1999}, been developed primarily on fixed probability spaces, with
classical Lebesgue spaces serving as the canonical domains. In this setting, the
dual representations of convex and coherent risk measures, established by
\citet{FoellmerSchied2002} and \citet{frittelli2002}, characterize capital
requirements through worst-case expectations over families of probability
measures, with penalty functions encoding the degree of model uncertainty. This
probabilistic framework has become fundamental in mathematical finance, both
because it provides an axiomatic account of capital requirements and because it
connects risk measurement to robust valuation and stress testing.

The choice of the underlying function space is, however, not merely technical.
It determines the admissible financial positions, the relevant continuity
properties, and the form of the dual variables. For this reason, a significant
part of the literature has extended risk measure theory beyond the classical
\(L^p\) and \(L^\infty\) domains. For instance, \citet{cheridito2009orlicz}
develop risk measures on Orlicz hearts in order to accommodate heavy-tailed
risks, while \citet{frittelli2014risk} take probability distributions themselves
as the primitive objects of evaluation. In a different direction,
\citet{cheridito2004} and \citet{cheridito2006} study risk measures on spaces of
stochastic processes, such as bounded c\`adl\`ag paths and discrete-time
cash-flow streams, motivated by path-dependent financial positions. The
question of the natural Banach space associated with a given risk functional is
also central in \citet{pichler2013}, who identifies Lorentz spaces in the case
of spectral risk measures. More recently, \citet{righi2025set} introduce set
risk measures, mapping closed and bounded sets of random variables into real
values, with applications to robust decision-making and systemic risk.

A particularly relevant contribution for the present paper is
\citet{delbaen2022cb}, who provides a robust characterization of monetary
utility functions on \(C_b(\Omega)\), the space of bounded continuous functions
on a Polish state space. This line of research shows that, once one moves away
from a fixed reference probability model, the geometry of the state space
becomes part of the risk measurement problem. Our contribution follows this
perspective, but replaces bounded continuous functions by benchmark-anchored
Lipschitz functions. We argue that, when the state space is metric and financial
positions are evaluated relative to a reference state, the natural domain is
the space of Lipschitz functions vanishing at that benchmark.

More precisely, let \((\Omega,d)\) be a Polish metric space and fix a benchmark
state \(\omega_0\in\Omega\). We take as primitive domain the Banach space
\(\Lip\) of real-valued Lipschitz functions satisfying \(X(\omega_0)=0\).
This space is not introduced as an arbitrary modelling restriction. Rather, it
is induced by the metric structure and by the benchmark itself. Each state
\(\omega\) determines the molecule
\(\delta_\omega-\delta_{\omega_0}\), which represents the deviation of
\(\omega\) from the benchmark at the level of point evaluations. The closed
linear span of these molecules is the Lipschitz-free space
\(\mathcal{F}(\Omega)\), whose dual is canonically identified with \(\Lip\).
Thus, benchmark-relative evaluation naturally leads to the Lipschitz-free
geometry of the state space; see also the perspective of
\citet{AliagaPernecka2021}, according to which Lipschitz functions vanishing at
a point play, in metric spaces, a role analogous to that of continuous
functions on compact Hausdorff spaces.

This geometry is closely connected to transport-based robustness. The
Wasserstein--\(1\) distance has become a standard way of quantifying model
uncertainty in risk measurement and distributionally robust optimization; see,
among others, \citet{kiesel2016}, \citet{blanchet2019quantifying}, and
\citet{glasserman2014robust}. By the Kantorovich--Rubinstein duality, the
natural test functions associated with Wasserstein--\(1\) transport are
precisely Lipschitz functions. Hence, once a benchmark state is fixed, the
anchored space \(\Lip\) provides the appropriate primal domain for risk
measures whose dual variables are described by transport-type redistributions
of mass. In this paper, we exploit this connection to obtain representations in
terms of signed measures with zero total mass and finite first moment. Such
measures naturally encode reallocations of mass relative to the benchmark.

Working on \(\Lip\) creates difficulties that are absent in the usual
Lebesgue-space setting. First, constants do not belong to \(\Lip\), except for
the zero function, since every position vanishes at the benchmark. Consequently,
classical cash additivity cannot be imposed. Second, although \(\Lip\) carries
the pointwise order and is a vector lattice, its Lipschitz norm need not be a
lattice norm: the implication \(|X|\leq |Y|\) need not imply
\(\|X\|_{\operatorname{Lip}}\leq\|Y\|_{\operatorname{Lip}}\). Thus, in
general, \(\Lip\) need not be a Banach lattice under the Lipschitz norm, and
standard arguments based on lattice norms or order-continuity are not directly
available. Third,
the relevant dual elements are not probability measures in their original
coordinates, but signed measures of zero total mass naturally paired with
benchmark-relative positions.

We address the absence of constants by replacing cash additivity with
additivity along a benchmark-deviation instrument. Following the general theory
of risk measures with eligible assets developed by
\citet{FarkasKochMunari2014} and \citet{Munari2016}, we fix an instrument
\(S\in\Lip\) and require invariance under shifts in the direction of \(S\).
In the present setting, however, \(S\) vanishes at \(\omega_0\) and therefore
cannot be bounded away from zero. It is not a numeraire in the usual sense, but
a benchmark-deviation instrument measuring the magnitude of departure from the
reference state.

The appropriate admissibility condition is that \(S\) be an order unit for the
pointwise order. Equivalently, \(S\) must dominate the benchmark distance:
there exists \(\kappa>0\) such that
$
S(\omega)\geq\kappa d(\omega,\omega_0).
$ 
The canonical choice is \(S=d(\cdot,\omega_0)\), but the admissible class also
contains other Lipschitz instruments that are uniformly comparable from below
with the benchmark distance. The order-unit property ensures that, for every
\(X\in\Lip\), there exists \(c_X>0\) such that \(|X|\leq c_XS\).
Consequently, the normalized payoff \(X/S\), extended by zero at
\(\omega_0\), is a bounded Borel function. The capital requirement can
therefore be interpreted as the number of units of the benchmark-deviation
instrument needed to restore acceptability, while \(X/S\) measures payoff per
unit of departure from the benchmark.

The resulting theory combines acceptance-set methods, \(S\)-additivity, and the
Lipschitz-free dual structure. For normalized, monotone, \(S\)-additive, convex,
and \(\sigma(\Lip,\Signed)\)-lower semicontinuous functionals, we obtain a
dual representation over positive normalized elements of \(\Signed\).
After a change of variables induced by \(S\), the same representation can be
expressed in terms of probability measures acting on the bounded normalized
payoff \(X/S\). In the coherent case, the penalty reduces to an indicator,
yielding a worst-case representation over a convex family of admissible
scenario measures.

The continuity requirements of the theory also reflect the geometry of the
domain. Under a mild boundedness condition on the sequence of capped
benchmark-deviation losses \(-\min(1,nS)\), the Fatou property for uniformly
Lipschitz-bounded pointwise convergence is equivalent to lower
semicontinuity with respect to both
\(\sigma(\Lip,\mathcal{F}(\Omega))\) and
\(\sigma(\Lip,\Signed)\). This equivalence connects the canonical
Lipschitz-free predual with the measure-representable dual pair used in the
risk-measure representation.

The framework covers several financial and economic environments in which the
state space has an intrinsic metric structure and a reference state.

\medskip
\noindent\textit{Temporal cash-flow profiles.}
Let \(T>0\) and let \(\Omega=[0,T]\) with its usual metric. A position is a
cumulative cash-flow profile \(X:[0,T]\to\R\) satisfying \(X(0)=0\) and a
Lipschitz bound. Choosing the benchmark \(\omega_0=0\) and the canonical
instrument \(S(t)=t\), the ratio \(X(t)/S(t)=X(t)/t\) represents profit per
unit of elapsed time. The Lipschitz condition implies that this temporal profit
intensity is uniformly bounded. Thus, the framework captures situations in
which gains and losses accumulate from an initial date at a controlled rate.

\medskip
\noindent\textit{Path-dependent financial positions.}
As in \citet{delbaen2022cb}, one may take \(\Omega\) to be a space of
trajectories, for instance \(C([0,T];\R^d)\) endowed with the supremum metric.
A benchmark trajectory \(\omega_0\) may represent a deterministic reference
path or baseline market evolution. Positions in \(\Lip\) are then
path-dependent payoffs whose values respond in a controlled manner to uniform
perturbations of the entire trajectory. The canonical instrument
\(S(\omega)=d(\omega,\omega_0)\) measures the maximal pathwise departure from
the reference trajectory, and \(X(\omega)/S(\omega)\) measures payoff per unit
of path deviation.

\medskip
\noindent\textit{Network and systemic risk.}
Motivated by \citet{hlavinova2024setvalued}, let \(\Omega\) be a countable set
of vertices representing financial institutions, equipped with the
shortest-path metric of a connected network. A benchmark vertex \(\omega_0\)
may represent a central bank, clearing counterparty, or reference institution.
With \(S=d(\cdot,\omega_0)\), the normalized payoff
\(X(\omega)/S(\omega)\) measures profit or loss per unit of graph distance from
the benchmark. This separates the magnitude of the evaluated position from the
institution's purely geometric location in the network.

\medskip
\noindent\textit{Model uncertainty and higher-order ambiguity.}
Let \(E\) be a Polish metric space and let \(\mathcal P_p(E)\) denote the set
of probability measures with finite \(p\)-th moment, endowed with the
Wasserstein--\(p\) metric. Since
\((\mathcal P_p(E),W_p)\) is itself Polish, it can serve as the state space.
Fixing a reference model \(\mu_0\), the canonical instrument
\(S(\mu)=W_p(\mu,\mu_0)\) measures the transport distance of a candidate model
from the benchmark. A position is then a Lipschitz functional on the space of
models, and \(X(\mu)/S(\mu)\) represents payoff per unit of model deviation.
This connects the framework with ambiguity-sensitive preferences and
higher-order uncertainty, as in \citet{gilboa_schmeidler_1989},
\citet{gilboa_schmeidler_1993}, \citet{epstein_schneider_2003}, and
\citet{ok2023lipschitz}.

These examples illustrate the common principle behind the paper. Once the state
space carries a metric and evaluation is anchored at a benchmark, monetary risk
measurement can be expressed through anchored Lipschitz positions,
order-unit benchmark instruments, and transport-based dual variables. The
Lipschitz-free approach therefore provides a functional-analytic foundation for
risk measurement in settings where robustness is inherently geometric rather
than tied to a fixed probability model.

The remainder of the paper is organized as follows. Section~2 introduces the
metric, measure-theoretic, and Lipschitz-free objects underlying the framework.
Section~3 defines the admissible benchmark instruments and develops the
correspondence between risk measures and their acceptance sets. In particular,
Propositions~\ref{Proposition Arho} and~\ref{Proposition A to rho} characterize the properties
transferred between a risk measure and its acceptance set and show how the
order-unit condition guarantees that the induced capital requirement is
finite-valued. Section~4 contains the main duality and continuity results.
\Cref{thm:convex-dual} establishes the signed-measure and probability
representations for convex risk measures, while Corollary~\ref{cor:coherent} gives the
corresponding coherent representation. \Cref{teo conti} establishes the
equivalence between the Fatou property and lower semicontinuity with respect to
the Lipschitz-free and measure-induced weak topologies. Finally, Section~5
illustrates the framework through worst-case, optimized-certainty-equivalent,
and Wasserstein-robust risk measures.

\section{Preliminaries}

We begin by introducing the mathematical objects that underlie our framework and their respective notation. Let \((\Omega,d)\) be a \emph{Polish metric space}, that is, a complete and separable metric space. We interpret \(\Omega\) as the \emph{state space} of the world, where each \(\omega\in\Omega\) denotes a possible state. The Borel \(\sigma\)-algebra generated by the metric topology is denoted by \(\mathcal B(\Omega)\).

Throughout, we adopt the following notation. 
We let \(\mathcal P(\Omega)\) denote the set of all Borel probability measures on \(\Omega\), and 
\(\mathcal P_1 (\Omega)\) the subset of measures with finite first moment, that is, \(\int_{\Omega} d(\omega,\omega_0)\, d\mu(\omega) < \infty\) for some (hence all) \(\omega_0\in\Omega\). 
We denote by \(\mathcal M(\Omega)\) the space of finite signed Borel measures on \(\Omega\), and by 
\(\Signed  (\Omega)\) the subspace of measures \(\eta\) such that \(\eta(\Omega)=0\) and 
\(\int_{\Omega} d(\omega, \omega_0)\,|\eta|(d\omega) < \infty\); in other words, measures of \emph{zero total mass} and \emph{finite first moment}. 
Every signed measure $\eta \in \mathcal{M}(\Omega)$ admits a unique \emph{Hahn--Jordan decomposition} $\eta = \eta^+ - \eta^-$, where $\eta^+$ and $\eta^-$ are nonnegative mutually singular Borel measures on $\Omega$ 
satisfying $|\eta| = \eta^+ + \eta^-$.
If \(\eta \in \Signed \), then \(\eta^{+}(\Omega) = \eta^{-}(\Omega) < \infty\), 
and both have finite first moment. 
If $\eta\neq0$, then $\eta^+(\Omega)=\eta^-(\Omega)>0$, and we may define
\(\mu := \eta^{+} / \eta^{+}(\Omega)\) and
\(\nu := \eta^{-} / \eta^{-}(\Omega)\), so that
\(\mu,\nu\in\mathcal P_1(\Omega)\). Hence every nonzero element of
\(\Signed(\Omega)\) can be represented as
\(\eta=c(\mu-\nu)\), where \(c>0\) and
\(\mu,\nu\in\mathcal P_1(\Omega)\); the case \(\eta=0\) is treated separately.

The space of real-valued Lipschitz functions on \(\Omega\) is denoted by 
\[
\operatorname{Lip}(\Omega)
:=\{X:\Omega\to\mathbb R:\;
\|X\|_{\mathrm{Lip}}
:=\sup_{\omega \neq \bar{\omega}}\tfrac{|X(\omega)-X(\bar{\omega})|}{d(\omega, \bar{\omega})}<\infty\}.
\] The Lipschitz seminorm \(\|\cdot\|_{\mathrm{Lip}}\) vanishes on constant functions, and  it cannot distinguish functions that differ only by a constant shift, hence, 
\(\operatorname{Lip}(\Omega)\) is only a seminormed space. 
To obtain a genuine Banach space, we fix a basepoint \(\omega_0\in\Omega\) and define 
\[
\operatorname{Lip}_0:=\{X\in\operatorname{Lip}(\Omega): X(\omega_0)=0\}.
\]
Throughout, we assume that $\Omega\neq\{\omega_0\}$. We also denote by $\mathcal{F}(\Omega)$ the Lipschitz-free space over $\Omega$, which is the canonical predual of $\Lip$.  We omit the explicit dependence on the space \(\Omega\) to simplify notation, writing, for example, \(\Lip := \operatorname{Lip}_0 (\Omega),
\quad
\mathcal{M}_0^1 := \Signed  (\Omega).\) The fact that $\operatorname{Lip}_0$ is a Banach space under the Lipschitz norm is a standard result, see  \cite{weaver2018lipschitz} or \cite{GodefroyKalton2003}, for instance.

The space $\operatorname{Lip}_0(\Omega)$ carries a natural pointwise partial order: we write 
$X \geq Y$ if $X(\omega) \geq Y(\omega)$ for all $\omega \in \Omega$. Under this order, 
$\operatorname{Lip}_0(\Omega)$ is an ordered Banach space whose positive cone
\begin{equation*}
    \operatorname{Lip}_0^+(\Omega) \coloneqq \{ X \in \operatorname{Lip}_0(\Omega) : X(\omega) \geq 0 
    \text{ for all } \omega \in \Omega \}
\end{equation*}
is closed in the Lipschitz norm and generating, in the sense that every element of  $\operatorname{Lip}_0(\Omega)$ can be written as the difference of two non-negative elements.  Moreover, the lattice operations -- pointwise maximum and minimum -- preserve Lipschitz  regularity, so that $\operatorname{Lip}_0(\Omega)$ is a Banach space and a vector lattice under the pointwise order. Although $\operatorname{Lip}_0(\Omega)$ is a vector lattice under the
pointwise order, its Lipschitz norm need not be a lattice norm. Indeed,
for general metric spaces, the implication
\(
|X|\leq |Y|
\quad\Longrightarrow\quad
\|X\|_{\mathrm{Lip}}\leq\|Y\|_{\mathrm{Lip}}
\)
may fail. Thus, in general, $\operatorname{Lip}_0(\Omega)$ is not a
Banach lattice with respect to the Lipschitz norm. Likewise, the
Lipschitz norm need not be order-continuous.

We take $\Lip$ to be our primal space of financial positions. We use the convention of writing \(X, Y \in \operatorname{Lip}_0\) to denote functions of profit and loss defined on the state space \(\Omega\).
All subsequent results and constructions will therefore be stated in terms of \(X\) and \(Y\). Henceforth, all pairings with measures are understood as
\[
\langle X,\eta\rangle := \int_{\Omega} X\,d\eta,
\]
for $X \in \operatorname{Lip}_0$ and $\eta \in \Signed $. Even if $\eta$ is not a probability measure, we also use the convention
\[
\int_{\Omega} X\,d\eta = \mathbb E_\eta [X]
\]
A key ingredient underlying our work is Wasserstein distances, a central family of metrics in optimal transport theory. We refer to \citet{villani2009optimal} for a comprehensive account of optimal transport theory and its applications. Let $\mathcal{P}_p(\Omega)$ be the space of probability measures with finite p-th moment. The Wasserstein--$p$ distance for each $p\geq1$ between $\mu$ and $\nu$ is the function $W_p: \mathcal{P}_p(\Omega) \times \mathcal{P}_p(\Omega) \rightarrow \mathbb{R}$ defined as
\[
W_p(\mu, \nu) := \left( \inf_{\pi \in \Pi(\mu, \nu)} \int_{\Omega \times \Omega} d(\omega, \bar{\omega})^p\, d\pi(\omega, \bar{\omega}) \right)^{1/p},
\]
where $\Pi(\mu, \nu)$ is the set of all couplings of $\mu$ and $\nu$, i.e., Borel measures on $\Omega \times \Omega$ with marginals $\mu$ and $\nu$. This is the Kantorovich relaxation of the classical Monge transport problem, which seeks a measurable transport map $T: \Omega \to \Omega$ pushing forward $\mu$ to $\nu$ (i.e., $T_\# \mu= \nu$) and minimizing the total transport cost $\int d(\omega, T(\omega))^p\, d\mu(\omega)$. 

Of particular importance in our framework is the case $p = 1$, where the Wasserstein--1 distance admits a dual representation in terms of Lipschitz-1 functions (see Remark 6.4 cf. \cite{villani2009optimal} for details) as \[
W_1(\mu,\nu) \;=\; \sup_{\substack{\|X\|_{\mathrm{Lip}} \leq 1}} \left( \int_\Omega X \, d\mu\;-\; \int_\Omega X \, d\nu \right),
\] This is the so-called \textbf{Kantorovich--Rubinstein Duality}. Equip $\mathcal{M}_0^1(\Omega)$ with the Kantorovich--Rubinstein norm
\[
\|\eta\|_{\mathrm{KR}}
:=
\sup_{\|X\|_{\mathrm{Lip}}\leq 1}
\left|\int_\Omega X\,d\eta\right|.
\]
By the Kantorovich--Rubinstein duality, every
$\eta\in\mathcal{M}_0^1(\Omega)$ defines a continuous linear functional
on $\operatorname{Lip}_0(\Omega)$ through
$\langle X,\eta\rangle:=\int_\Omega X\,d\eta$, and the resulting map
\[
\mathcal{M}_0^1(\Omega)\longrightarrow\operatorname{Lip}_0(\Omega)^*
\]
is an isometric embedding. Thus, $\mathcal{M}_0^1(\Omega)$ provides a
natural measure-representable dual partner of
$\operatorname{Lip}_0(\Omega)$; see, for instance,
\citet{godefroy2015survey}.

\section{Risk measures and their acceptance sets}

We first define the class of admissible benchmark-deviation instruments. In the usual
theory of \(S\)-additive risk measures, the eligible asset is required to be compatible
with the order structure, typically by imposing strict positivity, or even positivity
bounded away from zero; see
\cite{FarkasKochMunari2014}. This condition allows the eligible asset to play the
role of a numeraire. In the present setting, however, every element of \(\Lip\) vanishes
at the benchmark \(\omega_0\), so no admissible instrument can be bounded away from
zero. The appropriate replacement is to require \(S\) to be an order unit for the
pointwise order. This means that every benchmark-anchored payoff can be dominated, up
to a scalar multiple, by \(S\). Thus \(S\) provides a benchmark-deviation direction along
which arbitrary positions in \(\Lip\) can be compensated.

\begin{definition}\label{eligibleassetclass}
We define the class of admissible eligible assets by
\[
\mathcal{S}
:=
\left\{
S\in\Lip^+ :
\text{ for every } X\in\Lip \text{ there exists } c_X>0
\text{ such that }  X(\omega)\le c_XS (\omega)\,,\forall \omega \in \Omega
\right\}.
\]
For \(X\in\Lip\) and \(S\in\mathcal S\), we define \(X/S\) on
\(\Omega\setminus\{\omega_0\}\) in the usual way and set
\[
\frac{X}{S}(\omega_0):=0
\]
by convention.
\end{definition}

\begin{remark}
If \(S\in\mathcal S\), then \(X/S\) is a bounded Borel function for every
\(X\in\Lip\). Indeed, by the order-unit property, there exists \(c_X>0\) such that
\[
        |X|\le c_XS,
\]
and therefore
\[
        \left|\frac{X(\omega)}{S(\omega)}\right|\le c_X,
        \qquad \omega\ne\omega_0.
\]
Thus \(X/S\) is a bounded Borel function and belongs to
\(L^\infty(P)\) for every probability measure \(P\). In general, however, \(X/S\notin\Lip\). For instance, if
\(X=S\), then \(X/S\equiv 1\) on \(\Omega\setminus\{\omega_0\}\), while
\[
        \frac{X}{S}(\omega_0)=0.
\]
If \(\omega_0\) is an accumulation point of \(\Omega\), this function is not continuous
at \(\omega_0\), and hence it is not Lipschitz. Therefore \(X/S\) should be understood
as a bounded benchmark-normalized payoff, not as an element of the primal space.
\end{remark}

\begin{remark}
The canonical example is \(S=d(\cdot,\omega_0)\). Indeed, for every \(X\in\Lip\),
\[
        |X(\omega)|
        =
        |X(\omega)-X(\omega_0)|
        \le
        \|X\|_{\Lip}\,d(\omega,\omega_0).
\]
Hence \(d(\cdot,\omega_0)\in\mathcal S\). More generally, the order-unit condition
amounts to requiring that \(S\) dominate the benchmark distance up to a scalar multiple.
Thus admissible eligible assets are precisely benchmark-deviation instruments that are
strong enough, in the pointwise order, to control all Lipschitz positions anchored at
\(\omega_0\).
That is, every $S\in\mathcal S$ satisfies
\[
S(\omega)\geq \kappa d(\omega,\omega_0), \qquad \omega\in\Omega,
\]
for some $\kappa>0$. Conversely, this inequality is sufficient for $S$ to be an order unit, since every $X\in\operatorname{Lip}_0(\Omega)$ satisfies
$|X(\omega)|\leq \|X\|_{\operatorname{Lip}}d(\omega,\omega_0)$. Thus, an admissible instrument must control the metric distance from the benchmark uniformly. The following examples illustrate this requirement.

\noindent\textit{ Time-value of money.}
Let $\Omega=[0,T]$, endowed with the usual metric, fix $\omega_0=T$, and let $r>0$. The instrument
$S(t)=e^{-rt}-e^{-rT}$ measures the present-value advantage of receiving one unit of currency at time $t$ rather than at the benchmark date $T$. By the mean value theorem,
$S(t)\geq r e^{-rT}(T-t)$ for every $t\in[0,T]$. Therefore, $S$ dominates the distance to the benchmark and belongs to $\mathcal S$.

\noindent\textit{ Time and monetary uncertainty.}
Let $\Omega=[0,T]\times[\underline m,\overline m]$, endowed with the metric
$d((t,m),(s,q))=|t-s|+|m-q|$, and fix $\omega_0=(T,m_T)$. Define
$f(t,m)=me^{-r(T-t)}$. For any $\kappa>0$, the instrument
\[
S(t,m)
=
\bigl|f(t,m)-f(T,m_T)\bigr|
+
\kappa d\bigl((t,m),(T,m_T)\bigr)
\]
belongs to $\operatorname{Lip}_0^+(\Omega)$ and satisfies
$S\geq\kappa d(\cdot,\omega_0)$, so it is an order unit. The first term measures deviations in discounted monetary value, while the second ensures that every geometric departure from the benchmark is detected.
\end{remark}


In this paper, we take the approach of defining our risk measure axiomatically, building on the classical framework of convex and coherent risk measures in the tradition of \citet{follmer2016stochastic}.

\begin{definition} \label{def:risk}
Fix $S\in\mathcal S$. A risk measure $\rho:\operatorname{Lip}_0 \to \mathbb{R} $ is a convex risk measure if it is $\sigma(\Lip,\Signed)$-lower semicontinuous and satisfies the following conditions for all $X,Y \in \Lip$ 
\begin{enumerate}[noitemsep, topsep=0pt]
    \item \textbf{Monotonicity:} if \(X \ge Y\) pointwise, then \(\rho(X) \le \rho(Y)\). \label{rhomonotone}
    \item \textbf{S-additivity:} \(\rho(X + mS) = \rho(X) - m\), for   all  \(m \in \mathbb{R}\). \label{Sadditive}
    \item \textbf{Convexity:}  \(\rho(\lambda X + (1-\lambda)Y) \le \lambda \rho(X) + (1-\lambda)\rho(Y)\), for  any \(\lambda \in [0,1]\). \label{rhoconvex}

     \end{enumerate}
\noindent
    We say $\rho$ is a coherent risk measure if it is a convex risk measure and further satisfies  
    \begin{enumerate}[resume, noitemsep, topsep=0pt]
    \item \textbf{Positive homogeneity:} \(\rho(\lambda X) = \lambda \rho(X)\)  for  any $\lambda \in \mathbb{R}$ such that \(\lambda \ge 0\). \label{rhoposh}
\end{enumerate}
We assume, without  loss of generality that $\rho$ is normalized, i.e. $\rho(0)=0$. 

Any convex risk measure $\rho$ induces an  acceptance set by
 \[
A_\rho := \{\, X \in \operatorname{Lip}_0 : \rho(X) \le 0 \,\}.
\]   
\end{definition}

 An \textbf{Acceptance set} $A \subseteq \Lip$ is a set of acceptable financial positions, the acceptance set $A_\rho$ encodes the positions that are deemed acceptable under $\rho$ and plays a central role in the analysis of risk measures.  We also define risk measures $\rho_A$ through acceptance sets.

\begin{definition}\label{rho A} 
Let $A \subseteq \operatorname{Lip}_0$ be an acceptance set and let $S \in \mathcal{S}$ be a benchmark instrument. The risk measure induced by $A$ is defined by
\[
\rho_A(X)
:=
\inf \left\{\, m \in \mathbb{R} \; : \; X + mS \in A \right\},
\qquad X \in \operatorname{Lip}_0 .
\]
\end{definition}


\begin{proposition}\label{Proposition Arho}
Let $\rho:\operatorname{Lip}_0\to\mathbb R$ be a convex risk measure, and fix $S \in \mathcal S$, then $A_\rho$ satisfies the following properties for all $X, Y \in \Lip$ :
\begin{enumerate}[nosep]
    \item If $Y \in A_\rho$ and $X \ge Y$, then $X \in A_\rho$.\label{monotone}
    \item  \label{inf finite}$\inf\{m \in \R: mS \in A_\rho\} = 0$.

    \item $A_\rho$ is convex.\label{convex} 
    \item If $\rho$ is coherent, then $A_\rho$ is a convex cone.\label{cone}
    \item $A_\rho$ is $\sigma\bigl(\Lip(\Omega), \mathcal{M}_0^1(\Omega)\bigr)$-closed.\label{lipclosed}
    \item   $\rho_{A_\rho} = \rho$.\label{rhoA}
\end{enumerate}
\end{proposition}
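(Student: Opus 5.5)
We verify the six items directly from the axioms in Definition~\ref{def:risk}, essentially in the listed order; all are short consequences of monotonicity, $S$-additivity, convexity, positive homogeneity and the assumed lower semicontinuity.

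For item~\ref{monotone}, if $Y\in A_\rho$ and $X\ge Y$, monotonicity gives $\rho(X)\le\rho(Y)\le 0$, so $X\in A_\rho$. For item~\ref{inf finite}, normalization and $S$-additivity give $\rho(mS)=\rho(0+mS)=\rho(0)-m=-m$. Hence $mS\in A_\rho$ if and only if $m\ge 0$, and the infimum equals $0$. For item~\ref{convex}, take $X,Y\in A_\rho$ and $\lambda\in[0,1]$. Convexity yields $\rho(\lambda X+(1-\lambda)Y)\le\lambda\rho(X)+(1-\lambda)\rho(Y)\le 0$. For item~\ref{cone}, if $\rho$ is coherent and $X\in A_\rho$, $\lambda\ge0$, then $\rho(\lambda X)=\lambda\rho(X)\le 0$. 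So $A_\rho$ is a cone, and combined with item~\ref{convex} it is a convex cone. Closure under addition also follows directly: $X+Y=2(\tfrac12X+\tfrac12Y)$.

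For item~\ref{lipclosed}, note that $A_\rho=\rho^{-1}((-\infty,0])$ is a sublevel set of $\rho$. By Definition~\ref{def:risk}, $\rho$ is $\sigma(\Lip,\Signed)$-lower semicontinuous, so all of its sublevel sets are closed in that topology. This is the only point where a topological argument enters. The one thing to check is that we use the definition of lower semicontinuity via closed sublevel sets, which is the standard one, rather than a sequential version. If only a sequential version were given, the weak topology need not be sequential; but we adopt the topological definition, and then nothing further is required.

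For item~\ref{rhoA}, fix $X\in\Lip$. By $S$-additivity, $\rho(X+mS)=\rho(X)-m$, so
\[
\{m\in\R: X+mS\in A_\rho\}=\{m\in\R:\rho(X)-m\le 0\}=[\rho(X),\infty).
\]
Taking the infimum gives $\rho_{A_\rho}(X)=\rho(X)$. Since $\rho$ is real-valued, this set is a nonempty half-line, so the infimum is finite and attained. I expect no step to be a genuine obstacle. The only subtle point is the topological convention in item~\ref{lipclosed}, and it is resolved by the lower semicontinuity hypothesis built into the definition of a convex risk measure.
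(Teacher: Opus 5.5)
Your proposal is correct and follows the paper's proof essentially item by item. The only difference is cosmetic: in item~6 you use $S$-additivity to identify $\{m\in\R : X+mS\in A_\rho\}$ directly as the half-line $[\rho(X),\infty)$, whereas the paper proves $\rho_{A_\rho}\le\rho$ and $\rho\le\rho_{A_\rho}$ separately, so your version is a bit more compact and also shows that the infimum is attained.
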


\begin{proof}

\cref{monotone}: Remember that by assumption, $\rho$ is {monotone}, then let $Y \in A_\rho$ and suppose $X \ge Y$. 
Since $Y \in A_\rho$, we have $\rho(Y) \le 0$. 
By monotonicity, $\rho(X) \le \rho(Y) \le 0$, which implies $X \in A_\rho$. 


\cref{inf finite}: Direct calculation yields $\inf\{m \in \R: mS \in A_\rho\} = \inf\{m \in \R: \rho(mS) \leq 0 \} =  \inf\{m \in \R: m \geq 0\} = 0$

\cref{convex}: $A_\rho$ is a sublevel set of a convex function, hence, it is a convex set.

\cref{cone} If $\rho$ is {positively homogeneous},
then for any $X\in A_\rho$ and $\alpha\ge 0$ we have \(\rho(\alpha X)=\alpha\rho(X)\le 0,\) hence, $\alpha X\in A_\rho$ and $A_\rho$ is a convex cone.

\cref{lipclosed}: As $\rho$ is $\sigma(\Lip,\Signed)$-lower semicontinuous, all its sublevel sets are $\sigma(\Lip,\Signed)$-closed.

\cref{rhoA}: Let \(A_\rho := \{\, X \in \operatorname{Lip}_0 : \rho(X) \le 0 \,\}\) 
and define \(\rho_{A_\rho}(X) := \inf\{\, m \in \mathbb R : X + mS \in A_\rho \,\}\).
We show that \(\rho_{A_\rho} = \rho\).

\smallskip
\noindent(i) \(\rho_{A_\rho}(X) \le \rho(X)\):
Set \(m := \rho(X)\). 
By $S$-additivity, \(\rho(X + mS) = \rho(X) - m = 0\), 
so \(X + mS \in A_\rho\). 
Since \(m\) is admissible in the infimum defining \(\rho_{A_\rho}(X)\), it follows that
\[
\rho_{A_\rho}(X) = \inf\{\, r : X + rS \in A_\rho \,\} \le m = \rho(X).
\]

\smallskip
\noindent(ii) $\rho(X) \le \rho_{A_\rho}(X)$:
Let $m > \rho_{A_\rho}(X)$. By the definition of infimum, there exists 
$\tilde m < m$ such that $X + \tilde m S \in A_\rho$, so $\rho(X + \tilde m S) \le 0$.
Again using $S$--additivity,
\[
\rho(X) 
= \rho(X + \tilde m S) + \tilde m
\le 0 + \tilde m
< m.
\]
Since this holds for all $m > \rho_{A_\rho}(X)$, we conclude $\rho(X) \le \rho_{A_\rho}(X)$.

\smallskip
Combining (i) and (ii) yields $\rho(X) = \rho_{A_\rho}(X)$ for all 
$X \in \operatorname{Lip}_0$.

\end{proof}

\begin{proposition}\label{Proposition A to rho}
Fix $S\in\mathcal S$. Let $A\subseteq\operatorname{Lip}_0$ be upward
closed, convex, and $\sigma(\Lip,\Signed)$-closed, with \(
\inf\{m\in\R:mS\in A\}=0.
\)
Let $\rho_A$ be defined as in Definition~\ref{rho A}. Then $\rho_A$ is
finite-valued and the following
hold for all $X,Y\in\Lip$ and $\lambda,m\in\R$:
\begin{enumerate}[nosep]
    \item Monotonicity: If \(X \ge Y\), then \(\rho_A(X) \le \rho_A(Y)\).
    \label{rhoAmonotone}

    \item S-additivity:
    \(\rho_A(X + mS) = \rho_A(X) - m\) for all \(m \in \mathbb{R}\).
    \label{rhoASadditive}

    \item Convexity:
    \(\rho_A(\lambda X + (1-\lambda)Y) \le \lambda \rho_A(X) + (1-\lambda)\rho_A(Y)\), for \(\lambda \in [0,1]\).
    \label{rhoAconvex}

    \item Coherence: If $A$ is a cone, then
    \(\rho_A(\lambda X) = \lambda \rho_A(X)\) for all \(\lambda \ge 0\).
    \label{rhoAcoherent}

    \item \(A = A_{\rho_A} := \{X \in \Lip: \rho_A(X) \le 0\}.\)
    \label{Aequivalent}

    \item $\rho_A$ is $\sigma(\Lip,\Signed)$-lower semicontinuous.
    \label{rhoA_lsc}
\end{enumerate}
\end{proposition}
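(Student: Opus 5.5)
The plan is to first establish finiteness of \(\rho_A\), which is where the order-unit property of \(S\) is used, and then derive each listed property from the definition of \(\rho_A\) as an infimum.

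\textbf{Finiteness.} Fix \(X\in\Lip\).

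For the upper bound \(\rho_A(X)<+\infty\), use that \(\inf\{m:mS\in A\}=0\), so there is some \(m_0\le 1\) with \(m_0S\in A\). Since \(A\) is upward closed and \(S\ge0\), this gives \(S\in A\). By the order-unit property, \(-X\le c_{-X}S\), that is, \(X+c_{-X}S\ge 0\). Hence \(X+(c_{-X}+1)S\ge S\in A\), and so \(\rho_A(X)\le c_{-X}+1\).

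For the lower bound \(\rho_A(X)>-\infty\), argue by contradiction. Suppose \(X+mS\in A\) for arbitrarily negative \(m\). Since \(X\le c_XS\), we get \(X+mS\le (c_X+m)S\). Upward closedness then gives \((c_X+m)S\in A\) for \(c_X+m\) arbitrarily negative, which contradicts \(\inf\{m:mS\in A\}=0\).

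Monotonicity then transfers this: if \(X+mS\in A\), then \(X+mS\ge\) the corresponding shift of anything smaller. Finiteness is the key step, and it is the only place where the order-unit property of \(S\) is used.

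\textbf{Items 1--4.} These are routine manipulations of the infimum.

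Monotonicity: if \(X\ge Y\) and \(Y+mS\in A\), then \(X+mS\in A\). So the admissible set for \(Y\) is contained in that for \(X\).

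\(S\)-additivity: this is the reparametrisation \(m\mapsto m+m'\) of the index set.

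Convexity: take admissible \(m_X\) and \(m_Y\). By convexity of \(A\), \(\lambda(X+m_XS)+(1-\lambda)(Y+m_YS)\in A\), so \(\lambda m_X+(1-\lambda)m_Y\) is admissible for the convex combination. Then take infima.

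Positive homogeneity for a cone \(A\): for \(\lambda>0\) we have \(\lambda X+mS\in A\) iff \(X+(m/\lambda)S\in A\). For \(\lambda=0\), \(\rho_A(0)=\inf\{m:mS\in A\}=0\).

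\textbf{Item 5.} The inclusion \(A\subseteq A_{\rho_A}\) is immediate, since \(m=0\) is admissible.

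For the converse, let \(\rho_A(X)\le0\). Then \(X+m_nS\in A\) for some \(m_n\downarrow \rho_A(X)\), and in particular \(X+\max(m_n,\rho_A(X)+1/n)S\in A\) by upward closedness. So \(X+(\rho_A(X)+\epsilon_n)S\in A\) with \(\epsilon_n\to0\). Since \(\rho_A(X)\le 0\), upward closedness yields \(X+\epsilon_nS\in A\) for suitable \(\epsilon_n\downarrow0\).

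Now \(X+\epsilon_nS\to X\) in Lipschitz norm, hence also in \(\sigma(\Lip,\Signed)\). Indeed, each \(\eta\in\Signed\) is a norm-continuous functional via the Kantorovich--Rubinstein isometric embedding recalled in the preliminaries, so \(\langle \epsilon_nS,\eta\rangle\to0\). Closedness of \(A\) gives \(X\in A\).

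\textbf{Item 6.} By \(S\)-additivity and item 5, for every \(c\in\R\),
\[
\{X:\rho_A(X)\le c\}=\{X:\rho_A(X+cS)\le0\}=A-cS .
\]
This set is a translate of a \(\sigma(\Lip,\Signed)\)-closed set, and translation is a homeomorphism in this linear topology. Hence all sublevel sets are closed, which is lower semicontinuity.

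The main subtlety is item 5, namely passing from approximate admissibility to \(X\in A\). I will handle it by the norm-to-weak convergence argument above rather than attempting any lattice-type estimate, which is not available here since the Lipschitz norm need not be a lattice norm.
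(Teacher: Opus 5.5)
Your proof is correct and follows the paper's argument essentially step for step. Finiteness comes from the order-unit property combined with upward closedness and $\inf\{m: mS\in A\}=0$; items 1--4 are infimum manipulations; item 5 uses the approximation $X+\epsilon_n S\to X$ together with $\sigma(\operatorname{Lip}_0,\mathcal M_0^1)$-closedness; and item 6 writes sublevel sets as translates $A-cS$. The differences are cosmetic: you get the upper bound from $S\in A$ rather than from an arbitrary $Y\in A$, and your $\epsilon_n$ construction in item 5 is a roundabout version of the paper's direct observation that $X+\frac1n S\ge X+m_nS$ for admissible $m_n<\frac1n$. You should also delete the stray sentence ``Monotonicity then transfers this\ldots'', which says nothing.
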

\begin{proof}
We first prove that $\rho_A$ is finite-valued. The assumption
$\inf\{m\in\R:mS\in A\}=0$ implies that $A$ is nonempty and
$\rho_A(0)=0$. Fix $X\in\Lip$ and choose $Y\in A$. Since $S$ is an
order unit, there exists $\lambda>0$ such that $Y-X\leq\lambda S$.
Thus $Y\leq X+\lambda S$, and upward closedness of $A$ yields
$X+\lambda S\in A$. Hence $\rho_A(X)\leq\lambda<\infty$.

To prove the lower bound, choose $c>0$ such that $X\leq cS$. If
$X+mS\in A$, then $(c+m)S\geq X+mS$, and upward closedness yields
$(c+m)S\in A$. Since $\inf\{r\in\R:rS\in A\}=0$, we have
$c+m\geq0$. Therefore every admissible $m$ satisfies $m\geq-c$, so
$\rho_A(X)>-\infty$.

\cref{rhoAmonotone}: Let \(X,Y\in\operatorname{Lip}_0\) with \(X\geq Y\).
Since \(X \ge Y\) and \(A\) is upward closed, 
every \(m\) that makes \(Y + mS\) acceptable also makes \(X + mS\) acceptable.
That is,
\[
\{\, m \in \mathbb R : Y + mS \in A \,\}
\subseteq 
\{\, m \in \mathbb R : X + mS \in A \,\}.
\]
Applying the monotonicity of the infimum to these sets gives
\[\rho_A(X) 
= \inf\{\, m : X + mS \in A \,\} 
\le \inf\{\, m : Y + mS \in A \,\}
= \rho_A(Y).\]
Therefore, \(\rho_A\) is monotone.

\cref{rhoASadditive}: To show that $\rho_A$ is $S$-additive, we take any $m \in \mathbb R$, then direct calculation and a change of variable yield
\[
\rho_A(X + mS)
= \inf\{\, r \in \mathbb R : X + (m + r)S \in A \,\} = \inf\{\, r-m \in \mathbb R : X + rS \in A \,\} =  \rho_A(X) -m .
\]
Hence, $\rho_A$ satisfies the $S$-additivity property.

\cref{rhoAconvex}:
By assumption, $A$ is convex. Let $\lambda \in [0,1]$ and $X,Y \in \operatorname{Lip}_0$.
Define $B := \{b \in \R : Y + bS \in A\} $ and $C :=\{ c \in \R : X+cS \in A\}$, then for any $c \in C$ and $b \in B$,
convexity of $A$ implies 
\[
\lambda (X + cS) + (1-\lambda)(Y + bS)
= (\lambda X + (1-\lambda)Y)
  + (\lambda c + (1-\lambda)b)S \in A.
\]
Hence, for any $c \in C$ and $b \in B$,    
$\lambda c + (1-\lambda)b  \in \{ m \in \R :  \lambda X +(1-\lambda) Y + mS  \in A\,\}$ and consequently
$\lambda c + (1-\lambda)b \geq \inf\{\, m \in \R :  \lambda X +(1-\lambda) Y + mS  \in A\,\}  = \rho_A(\lambda X + (1-\lambda)Y)$. 
Now, taking the infimum over all such $c,b$ yields
\[ \rho_A(\lambda X + (1-\lambda)Y) \leq \inf_{c \in C, b \in B} \{\lambda c + (1-\lambda) b\} =  \lambda \inf_{c \in C} \{ c \} +  (1-\lambda) \inf_{ b \in B} \{ b\} = \lambda \rho_A (X) + (1-\lambda)\rho_A (Y).
\]
Thus, $\rho_A$ is convex.

\cref{rhoAcoherent}: If, in addition, $A$ is a convex cone, then
$\rho_A(0)=0$. For every $\lambda>0$,
\[
\rho_A(\lambda X)
= \inf\{\, m : \lambda X + mS \in A \,\}
= \lambda \inf\{\, r : X + rS \in A \,\}
= \lambda \rho_A(X).
\]
Thus, $\rho_A$ is positively homogeneous and therefore coherent.

\cref{Aequivalent} The inclusion \(A\subseteq A_{\rho_A}\)
is immediate. Conversely, let \(X\in A_{\rho_A}\). Then
\(\rho_A(X)\le 0\), and hence, for every \(n\in\mathbb N\), there exists
\(m_n<1/n\) such that \(X+m_nS\in A\). Since \(S\ge 0\), we have
\(X+\frac1nS\ge X+m_nS\). By upward closedness of $A$, it follows that
\(X+\frac1nS\in A\) for every \(n\in\mathbb{N}\). Since \(X+\frac1nS\to X\) in norm, hence
also in \(\sigma(\Lip,\Signed)\), and \(A\) is \(\sigma(\Lip,\Signed)\)-closed,
we get \(X\in A\). Thus \(A=A_{\rho_A}\).

\cref{rhoA_lsc}: It is enough to show that every sublevel set of \(\rho_A\) is
\(\sigma(\Lip,\Signed)\)-closed, first, fix \(c\in\mathbb R\). By \(S\)-additivity of \(\rho_A\), we have
\(\rho_A(X+cS)=\rho_A(X)-c\). Therefore
\[
\{X\in\Lip:\rho_A(X)\le c\}
=
\{X\in\Lip:\rho_A(X+cS)\le 0\}
=
\{X\in\Lip:X+cS\in A\}.
\]
The map \(X\mapsto X+cS\) is continuous for \(\sigma(\Lip,\Signed)\), because
for every \(\eta\in \Signed\),
\(\langle X+cS,\eta\rangle=\langle X,\eta\rangle+c\langle S,\eta\rangle\).
Hence the last set is the inverse image of the \(\sigma(\Lip,\Signed)\)-closed
set \(A\) under a \(\sigma(\Lip,\Signed)\)-continuous map. Thus it is
\(\sigma(\Lip,\Signed)\)-closed.
Therefore all sublevel sets of \(\rho_A\) are \(\sigma(\Lip,\Signed)\)-closed,
and so \(\rho_A\) is \(\sigma(\Lip,\Signed)\)-lower semicontinuous.

\end{proof}

\section{Dual representations}

{Recall that $S \in \mathcal{S}$ is an order unit. This allows us to use $S$ as a density to reparametrize probability 
measures: given $P\in \mathcal{P}(\Omega)$ with $P(\{\omega_0\}) = 0$, we may 
define a new measure $\mu_S$ by weighting $P$ by $1/S$. The following lemma makes 
this precise and shows that integrating $X$ against $\mu_S$ is equivalent to 
integrating the discounted payoff $X/S$ against $P$, with the integral always finite 
and controlled by the Lipschitz norm of $X$. }



\begin{lemma}\label{lem:canonical-discounting}
Let $S \in \mathcal{S}$ and \(P\in\mathcal P(\Omega)\) satisfy \(P(\{\omega_0\})=0\), and define the Borel measure \(\mu_S\) by
\[
\mu_S(A):=\int_A \frac{1}{S}\,dP,\qquad A\in\mathcal B(\Omega),
\]
where \(1/S\) is understood as \(0\) at \(\omega_0\). Then, for every \(X\in\operatorname{Lip}_0(\Omega)\), the function \(X/S\), extended by \(0\) at \(\omega_0\), is Borel measurable and bounded. In particular, \(X\) is \(\mu_S\)-integrable and
\[
\int_\Omega X\,d\mu_S=\int_\Omega \frac{X}{S}\,dP.
\]
Moreover, if \(\int_\Omega 1/S\,dP<\infty\), then \(\mu_S\) is a finite positive Borel measure.
\end{lemma}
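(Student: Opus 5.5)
The plan is to verify measurability and boundedness of \(X/S\) first, and then obtain the integral identity by a change-of-density argument for the nonnegative density \(1/S\) with respect to \(P\).

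\emph{Step 1 (measurability and boundedness).} Every \(X\in\Lip\) and \(S\in\mathcal S\) is Lipschitz, hence continuous, hence Borel. The set \(\Omega\setminus\{\omega_0\}\) is open, and \(S>0\) there because \(S\geq\kappa d(\cdot,\omega_0)\) by the remark following Definition~\ref{eligibleassetclass}. So \(1/S\) is continuous on \(\Omega\setminus\{\omega_0\}\), and the extension \(\mathbf 1_{\Omega\setminus\{\omega_0\}}/S\) (defined as \(0\) at \(\omega_0\)) is Borel; the same holds for \(X/S\). Boundedness comes from the order-unit property: there is \(c_X>0\) with \(|X|\le c_XS\) (apply the definition to \(X\) and to \(-X\)), so \(|X/S|\le c_X\). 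One can even take \(c_X=\|X\|_{\mathrm{Lip}}/\kappa\), which gives control by the Lipschitz norm.

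\emph{Step 2 (\(\mu_S\) is a positive measure).} The density \(g:=1/S\) is a nonnegative Borel function, so \(\mu_S(A)=\int_A g\,dP\) is a countably additive positive Borel measure by monotone convergence, possibly taking infinite values. The standard density identity \(\int f\,d\mu_S=\int fg\,dP\) for nonnegative Borel \(f\) then follows in the usual way: it holds for indicators by definition, extends to simple functions by linearity, and to nonnegative \(f\) by monotone convergence.

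\emph{Step 3 (integrability and the identity).} Apply Step 2 to \(f=X^\pm\). At \(\omega_0\) we have \(X(\omega_0)=0\) and \(g(\omega_0)=0\), while on \(\Omega\setminus\{\omega_0\}\) we have \(X^\pm g=(X/S)^\pm\). Therefore
\[
\int X^\pm\,d\mu_S=\int (X/S)^\pm\,dP\le c_X<\infty ,
\]
using that \(P\) is a probability measure. So \(X\) is \(\mu_S\)-integrable, and subtracting the two identities gives \(\int X\,d\mu_S=\int (X/S)\,dP\). The hypothesis \(P(\{\omega_0\})=0\) makes the convention at \(\omega_0\) immaterial.

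\emph{Step 4 (finiteness).} If \(\int 1/S\,dP<\infty\), then \(\mu_S(\Omega)<\infty\), so \(\mu_S\) is a finite positive Borel measure.

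The only genuinely delicate point is Step 3. There \(\mu_S\) may be an infinite measure, with \(1/S\) unbounded near \(\omega_0\), so integrability of \(X\) cannot come from bounding \(X\) alone. It has to come from the pointwise product bound \(|X|/S\le c_X\), which is why the argument goes through the positive and negative parts and monotone convergence rather than a direct bounded-function argument.
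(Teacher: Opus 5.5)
Your proposal is correct and follows essentially the same route as the paper. The paper also bounds $|X/S|$ by the order-unit property and reads $\int_\Omega |X|\,d\mu_S=\int_\Omega |X/S|\,dP<\infty$ and the integral identity directly off the density definition of $\mu_S$. Your version is more explicit about points the paper leaves implicit: that $S>0$ off $\omega_0$, which is needed for continuity of $X/S$ there; that the two-sided bound $|X|\le c_XS$ comes from applying the definition to $\pm X$; and that the density identity is justified by the standard simple-function and monotone-convergence argument.
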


\begin{proof}
Let \(X\in\operatorname{Lip}_0(\Omega)\). Since \(X(\omega_0)=0\), we have
\[
|X(\omega)|\le \|X\|_{\operatorname{Lip}}\,d(\omega,\omega_0),\qquad \omega\in\Omega.
\]
As $S$ is an order unit, for all $X \in \Lip$  there is an $s_X<\infty$ such that \(|X|\leq s_X S\), it follows that, for $\omega \neq \omega_0$
\[
\left|\frac{X(\omega)}{S(\omega)}\right|\le s_X.
\]
Since \(X/S\) is defined as \(0\) at \(\omega_0\), this gives \(\|X/S\|_\infty\le s_X<\infty\). Moreover, \(X/S\) is continuous on \(\Omega\setminus\{\omega_0\}\), and its extension by \(0\) at \(\omega_0\) is Borel measurable.

Hence \(X/S\) is \(P\)-integrable. By the definition of \(\mu_S\),
\[
\int_\Omega |X|\,d\mu_S=\int_\Omega \frac{|X|}{S}\,dP=\int_\Omega \left|\frac{X}{S}\right|\,dP<\infty,
\]
so \(X\) is \(\mu_S\)-integrable. Again by the definition of \(\mu_S\), we obtain
\[
\int_\Omega X\,d\mu_S=\int_\Omega X\frac{1}{S}\,dP=\int_\Omega \frac{X}{S}\,dP.
\]
Finally, \(\mu_S(\Omega)=\int_\Omega 1/S\,dP\). Therefore, if \(\int_\Omega 1/S\,dP<\infty\), then \(\mu_S\) is finite.
\end{proof}

\begin{theorem}
\label{thm:convex-dual}
Fix $S\in\mathcal S$. A map
$\rho:\operatorname{Lip}_0(\Omega)\to\mathbb{R}$ is a convex risk
measure if and only if there exists a proper convex penalty function
$\alpha:C\to[0,+\infty]$, with $\inf_C\alpha=0$, such that
\[
\rho(X)
=
\sup_{\eta\in C}
\bigl\{
\langle -X,\eta\rangle-\alpha(\eta)
\bigr\},
\]
where
\[
C
:=
\Bigl\{
\eta\in\mathcal{M}_0^1(\Omega):
\langle X,\eta\rangle\geq0
\text{ for every }X\in\operatorname{Lip}_0^+(\Omega),
\ \langle S,\eta\rangle=1
\Bigr\}.
\]
The minimal penalty is
\[
\alpha_{\min}(\eta)
:=
\sup_{X\in A_\rho}\langle -X,\eta\rangle,
\qquad \eta\in C.
\]
Equivalently, there exists a proper convex penalty function
$\alpha_S:\mathcal{P}_S\to[0,+\infty]$, with
$\inf_{\mathcal{P}_S}\alpha_S=0$, such that
\[
\rho(X)
=
\sup_{P\in\mathcal{P}_S}
\left\{
-\mathbb{E}_P\!\left[\frac{X}{S}\right]-\alpha_S(P)
\right\},
\]
where
\[
\mathcal{P}_S
=
\left\{
P\in\mathcal{P}(\Omega):
P(\{\omega_0\})=0,\ 
\int_{\Omega}\frac{1}{S}\,dP<\infty
\right\}.
\]
The minimal probability penalty is
\[
\alpha_{S,\min}(P)
=
\sup_{X\in A_\rho}
\left\{
-\mathbb E_P\!\left[\frac{X}{S}\right]
\right\}.
\]
\end{theorem}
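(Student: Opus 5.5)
The proof splits into three parts: the sufficiency direction, the necessity direction via Fenchel--Moreau on the dual pair $\langle \Lip,\Signed\rangle$, and the transfer between $C$ and $\mathcal P_S$.

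\emph{Sufficiency.} Suppose $\rho$ has the stated form. For each $\eta\in C$, the map $X\mapsto\langle -X,\eta\rangle-\alpha(\eta)$ has the following properties.
\begin{itemize}
\item It is affine and $\sigma(\Lip,\Signed)$-continuous.
\item It is decreasing in $X$, because $\eta$ is positive on $\Lip^+$.
\item It is $S$-additive, because $\langle S,\eta\rangle=1$.
\end{itemize}
A supremum of such maps is convex, monotone, $S$-additive and $\sigma(\Lip,\Signed)$-lower semicontinuous. The condition $\inf_C\alpha=0$ gives $\rho(0)=0$. Finiteness needs one more remark. For $X\in\Lip$ pick $c$ with $|X|\le cS$. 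Then $\langle -X,\eta\rangle\le c\langle S,\eta\rangle=c$, so $\rho(X)\le c$. Properness of $\alpha$ gives $\rho(X)>-\infty$.

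\emph{Necessity.} We apply Fenchel--Moreau to the dual pair $\langle \Lip,\Signed\rangle$. This pairing separates points: $\delta_\omega-\delta_{\omega_0}\in\Signed$, so $\langle X,\delta_\omega-\delta_{\omega_0}\rangle=X(\omega)$. Hence $(\Lip,\sigma(\Lip,\Signed))$ is a Hausdorff locally convex space whose dual is $\Signed$. Since $\rho$ is proper, convex and lower semicontinuous for this topology, we get $\rho(X)=\sup_{\eta\in\Signed}\{\langle -X,\eta\rangle-\rho^*(-\eta)\}$. We write this with the sign change $\eta\mapsto-\eta$, where $\rho^*(-\eta)=\sup_X\{\langle -X,\eta\rangle-\rho(X)\}$.

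We then show that $\rho^*(-\eta)<\infty$ forces $\eta\in C$.
\begin{itemize}
\item \emph{Positivity.} If $\langle Y,\eta\rangle<0$ for some $Y\ge0$, then monotonicity gives $\rho(tY)\le\rho(0)=0$. The expression $-t\langle Y,\eta\rangle-\rho(tY)$ is therefore at least $-t\langle Y,\eta\rangle$, which tends to $+\infty$ as $t\to\infty$.
\item \emph{Normalization.} $S$-additivity gives $\langle -(X+mS),\eta\rangle-\rho(X+mS)=\langle -X,\eta\rangle-\rho(X)+m\bigl(1-\langle S,\eta\rangle\bigr)$. Letting $m\to\pm\infty$ forces $\langle S,\eta\rangle=1$.
\end{itemize}
On $C$ the conjugate reduces to the acceptance set. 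Any $X$ can be written as $X=(X+\rho(X)S)-\rho(X)S$, and $X+\rho(X)S\in A_\rho$. This shows $\rho^*(-\eta)=\sup_{X\in A_\rho}\langle -X,\eta\rangle=\alpha_{\min}(\eta)$. This quantity is nonnegative because $0\in A_\rho$. Evaluating the representation at $X=0$ gives $\inf_C\alpha_{\min}=0$. Minimality follows from the standard argument: any admissible $\alpha$ satisfies $\alpha(\eta)\ge\langle -X,\eta\rangle-\rho(X)$ for all $X$.

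\emph{Probability form.} We need a bijection $C\leftrightarrow\mathcal P_S$, $\eta\mapsto P_\eta$, such that $\langle X,\eta\rangle=\mathbb E_{P_\eta}[X/S]$. Lemma~\ref{lem:canonical-discounting} supplies the inverse direction. Given $P\in\mathcal P_S$, set $\eta:=\mu_S-\mu_S(\Omega)\delta_{\omega_0}$. It has zero mass, and its first moment is finite because $d(\cdot,\omega_0)\le S/\kappa$. Since $X(\omega_0)=0$, we get $\langle X,\eta\rangle=\int X\,d\mu_S=\mathbb E_P[X/S]$. The element $\eta$ is positive on $\Lip^+$, and $\langle S,\eta\rangle=1$.

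The forward direction is the main obstacle. Given $\eta\in C$, we must show that $\eta$ restricted to $\Omega\setminus\{\omega_0\}$ is a nonnegative measure. The plan is to test $\eta$ against nonnegative Lipschitz bumps supported away from $\omega_0$, such as truncated distance functions $\min\{1,n\,d(\cdot,F)\}\cdot\min\{1,n\,d(\cdot,\omega_0)\}$ for closed sets $F$. Nonnegativity on all such bumps gives $\eta\ge0$ on open sets avoiding $\omega_0$, and then on Borel sets by regularity. Then $\eta=\eta|_{\Omega\setminus\{\omega_0\}}-\eta(\Omega\setminus\{\omega_0\})\delta_{\omega_0}$, and we define $dP:=S\,d\eta$ on $\Omega\setminus\{\omega_0\}$. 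Its total mass is $\int S\,d\eta=\langle S,\eta\rangle=1$, where the approximation $S\min\{1,nd(\cdot,\omega_0)\}\uparrow S\mathbf 1_{\Omega\setminus\{\omega_0\}}$ and monotone convergence justify the limit. Finally, $\int (1/S)\,dP=\eta(\Omega\setminus\{\omega_0\})<\infty$, so $P\in\mathcal P_S$, and $\mu_S=\eta|_{\Omega\setminus\{\omega_0\}}$. Defining $\alpha_S(P_\eta):=\alpha(\eta)$ transfers every formula, including $\alpha_{S,\min}$.
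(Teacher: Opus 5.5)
Your proposal is correct and follows essentially the same route as the paper. It applies Fenchel--Moreau on the pair $(\operatorname{Lip}_0,\mathcal{M}_0^1)$, uses monotonicity and $S$-additivity to confine the effective domain of the conjugate to $C$, identifies the conjugate with $\alpha_{\min}$ via $X+\rho(X)S\in A_\rho$, and builds the bijection $C\leftrightarrow\mathcal P_S$ through $dP=S\,d\eta$ and $d\mu_S=S^{-1}dP$. The only minor variation is in showing $\eta\ge 0$ away from $\omega_0$: you use increasing Lipschitz bumps plus outer regularity, while the paper argues by contradiction with a single bump around a compact subset of the negative Hahn set, and both arguments work.
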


\begin{proof}
($\Rightarrow$) Recall that, by definition, \(\rho\) is normalized, monotone, \(S\)-additive, convex and
\(\sigma(\Lip,\Signed)\)-lower semicontinuous. By the Fenchel--Moreau theorem
for the dual pair \((\Lip,\Signed)\),
\[
\rho(X)=\sup_{\eta\in \Signed}\{\langle -X,\eta\rangle-\rho^*(\eta)\},
\qquad
\rho^*(\eta):=\sup_{Y\in\Lip}\{\langle -Y,\eta\rangle-\rho(Y)\}.
\]
We next characterize the effective domain of $\rho^*$. Let $\eta\in\Signed (\Omega)$.
By $S$--additivity,
\[
\rho^*(\eta)\ge
\sup_{m\in\mathbb R}
\{\langle -Y-mS,\eta\rangle-\rho(Y+mS)\}
=
\langle -Y,\eta\rangle-\rho(Y)
+\sup_{m\in\mathbb R}m(1-\langle S,\eta\rangle).
\]
Therefore, \(\rho^*(\eta)<\infty\) implies \(\langle S,\eta\rangle=1\).



Now let \(Z\in\Lip^+\). If \(\langle Z,\eta\rangle<0\), then, since
\(\lambda Z\ge 0\), for any real $\lambda>0$, monotonicity and normalization give
\(\rho(\lambda Z)\le \rho(0)=0\). Thus
\[
\rho^*(\eta)\ge
\langle -\lambda Z,\eta\rangle-\rho(\lambda Z)
\ge
-\lambda\langle Z,\eta\rangle
\to+\infty.
\]
 Hence, if $\rho^*(\eta) < +\infty$, \(\langle Z,\eta\rangle\ge 0\) for every
\(Z\in\Lip^+\). The set $C$ is nonempty since for every $\omega\neq\omega_0$ we have
\(
\frac{\delta_\omega-\delta_{\omega_0}}{S(\omega)}\in C.
\) Thus every element in the effective domain of \(\rho^*\)
belongs to \(C\), and the Fenchel--Moreau representation reduces to
\[
\rho(X)=\sup_{\eta\in C}\{\langle -X,\eta\rangle-\rho^*(\eta)\}.
\]
We now show that, on \(C\), the conjugate coincides with the acceptance-set
penalty. Let \(\eta\in C\). If \(Y\in A_\rho\), then
\(\rho(Y)\le 0\), and hence,
\(\langle -Y,\eta\rangle\le \langle -Y,\eta\rangle-\rho(Y)\le \rho^*(\eta)\).
Therefore \(\sup_{Y\in A_\rho}\langle -Y,\eta\rangle\le \rho^*(\eta)\).

Conversely, for arbitrary \(Y\in\Lip\), set \(\widetilde Y=Y+\rho(Y)S\).
By \(S\)-additivity, \(\rho(\widetilde Y)=0\), so
\(\widetilde Y\in A_\rho\). Since \(\langle S,\eta\rangle=1\),
\[
\langle -Y,\eta\rangle-\rho(Y)
=
\langle -\widetilde Y,\eta\rangle
\le
\sup_{Z\in A_\rho}\langle -Z,\eta\rangle.
\]
Taking the supremum over \(Y\in\Lip\) gives
\(\rho^*(\eta)\le \sup_{Z\in A_\rho}\langle -Z,\eta\rangle\). Hence the minimal penalty satisfies
\(\alpha_{\min}(\eta)=\rho^*(\eta)\) on \(C\). Setting
\(\alpha:=\alpha_{\min}\), we have \(\alpha\ge 0\), and since
\(\rho(0)=0\), the representation gives \(\inf_C\alpha=0\).
Convexity of \(\alpha\) follows because it is the supremum of affine
functions of \(\eta\).

It remains to rewrite the probability representation. Let
\(\eta\in C\). Since \(\eta\) is positive on \(\Lip^+\), its negative part is
concentrated at \(\omega_0\). Indeed, if \(\eta^-(\{\omega_0\}^c)>0\), choose
a compact set \(K\subset\{\omega_0\}^c\) such that \(\eta^-(K)>0\) and
\(\eta^+(K)=0\). By regularity, choose \(\varepsilon>0\) such that
\(U:=\{\omega:d(\omega,K)<\varepsilon\}\) does not contain \(\omega_0\) and
\(\eta^+(U)<\eta^-(K)\). The function
\(\varphi(\omega):=(1-d(\omega,K)/\varepsilon)^+\) belongs to \(\Lip^+\),
vanishes at \(\omega_0\), equals \(1\) on \(K\), and is supported in \(U\).
Thus
\(\langle \varphi,\eta\rangle
\le \eta^+(U)-\eta^-(K)<0\), contradicting positivity on \(\Lip^+\).
Therefore \(\eta^- = c\delta_{\omega_0}\). Since \(\eta(\Omega)=0\), we have
\(c=\eta^+(\Omega)\), and hence,
\[
\eta=\eta^+-\eta^+(\Omega)\delta_{\omega_0}.
\]

For $\eta\in C$, define $P_\eta$ by $dP_\eta=S\,d\eta^+$. Since
$\eta^-=\eta^+(\Omega)\delta_{\omega_0}$ and
$\langle S,\eta\rangle=1$, we have $P_\eta\in\mathcal{P}_S$ and
\[
\int_\Omega \frac{1}{S}\,dP_\eta=\eta^+(\Omega)<\infty.
\] Conversely, for $P\in\mathcal{P}_S$, define
$d\mu_S=S^{-1}\,dP$, with $S^{-1}(\omega_0)=0$, and set
$\eta_P:=\mu_S-\mu_S(\Omega)\delta_{\omega_0}$. Since
$S\geq\kappa d(\cdot,\omega_0)$ for some $\kappa>0$,
\[
\int_\Omega d(\omega,\omega_0)\,d\mu_S
=
\int_\Omega\frac{d(\omega,\omega_0)}{S(\omega)}\,dP
\leq\frac{1}{\kappa}.
\]
Hence $\eta_P\in\mathcal{M}_0^1(\Omega)$. Moreover,
$\langle Z,\eta_P\rangle\geq0$ for every $Z\in\Lip^+$,
$\langle S,\eta_P\rangle=1$, and
\[
\langle X,\eta_P\rangle
=
\int_\Omega\frac{X}{S}\,dP,
\qquad X\in\Lip.
\]
Therefore $\eta_P\in C$, and the two constructions are inverse to each
other. Thus $C$ and $\mathcal{P}_S$ are in bijective correspondence.

Define \(\alpha_S(P):=\alpha(\eta_P)\). For
\(\alpha=\alpha_{\min}\), this gives the minimal probability penalty
\[
\alpha_{S,\min}(P)=
\sup_{X\in A_\rho}
\left\{-\int_\Omega \frac{X}{S}\,dP\right\}.
\]
Substituting \(\eta=\eta_P\) in the dual representation yields
\[
\rho(X)=
\sup_{P\in\mathcal{P}_S}
\left\{
-\int_\Omega \frac{X}{S}\,dP-\alpha_S(P)
\right\}.
\]

($\Leftarrow$)
Conversely, suppose that \(\rho\) is defined by the representation. Since
$S$ is an order unit, for every $X\in\Lip$ there exists $c_X>0$ such
that $|X|\leq c_XS$. Thus, for every $\eta\in C$,
\(
|\langle X,\eta\rangle|
\leq \langle |X|,\eta\rangle
\leq c_X\langle S,\eta\rangle
=c_X.
\)
Since $\alpha\geq0$, the representation gives $\rho(X)\leq c_X$.
Moreover, for every $\varepsilon>0$, there exists
$\eta_\varepsilon\in C$ such that
$\alpha(\eta_\varepsilon)<\varepsilon$, and therefore
\[
\rho(X)
\geq
-\langle X,\eta_\varepsilon\rangle-\alpha(\eta_\varepsilon)
\geq
-c_X-\varepsilon.
\]
Letting $\varepsilon\downarrow0$ yields $\rho(X)\geq-c_X$. Hence
$\rho$ is finite-valued.

Since $\rho$ is the supremum of affine
$\sigma(\Lip,\Signed)$-continuous functions, it is convex and
$\sigma(\Lip,\Signed)$-lower semicontinuous. Since every \(\eta\in C\) is
positive on \(\Lip^+\), the representation is monotone. Since
\(\langle S,\eta\rangle=1\) for every \(\eta\in C\), we have
\(\rho(X+mS)=\rho(X)-m\). Finally,
\[
\rho(0)=\sup_{\eta\in C}\{-\alpha(\eta)\}=-\inf_C\alpha=0.
\]
Thus \(\rho\) is is normalized, monotone, \(S\)-additive, convex and
\(\sigma(\Lip,\Signed)\)-lower semicontinuous, i.e. a convex risk measure.

\end{proof}

\begin{corollary}[Coherent case]
\label{cor:coherent}
Fix \(S\in\mathcal S\). A map \(\rho:\Lip\to\mathbb R\) is a coherent risk
measure if and only if there exists a nonempty convex set
\(\mathcal{P}_S^\rho\subseteq \mathcal{P}_S\) such that, for every \(X\in\Lip\),
\[
\rho(X)
=
\sup_{P\in\mathcal{P}_S^\rho}
\left\{
-\mathbb E_P\!\left[\frac{X}{S}\right]
\right\} , \qquad X\in\operatorname{Lip}_0(\Omega).
\]
\end{corollary}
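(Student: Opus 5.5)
The plan is to derive the coherent case from \Cref{thm:convex-dual}: positive homogeneity forces the minimal penalty to take only the values $0$ and $+\infty$.

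\emph{Forward direction.} Let $\rho$ be coherent. Then $A_\rho$ is a convex cone by Proposition~\ref{Proposition Arho}. For $\eta\in C$ the minimal penalty is
\[
\alpha_{\min}(\eta)=\sup_{X\in A_\rho}\langle -X,\eta\rangle.
\]
If some $X\in A_\rho$ has $\langle -X,\eta\rangle>0$, then $\lambda X\in A_\rho$ for all $\lambda>0$, and letting $\lambda\to\infty$ gives $\alpha_{\min}(\eta)=+\infty$. Otherwise $\alpha_{\min}(\eta)\le 0$, and since $0\in A_\rho$ we get $\alpha_{\min}(\eta)=0$. So $\alpha_{\min}=\iota_{C_\rho}$, where
\[
C_\rho:=\{\eta\in C:\langle X,\eta\rangle\ge0 \text{ for all } X\in A_\rho\}.
\]
Through the bijection $P\leftrightarrow\eta_P$ from the proof of \Cref{thm:convex-dual}, $\alpha_{S,\min}$ is the indicator of
\[
\mathcal P_S^\rho:=\{P\in\mathcal P_S:\eta_P\in C_\rho\}.
\]
Substituting into the probability representation yields the claimed formula.

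The remaining points for this direction are:
\begin{itemize}
\item \emph{Nonemptiness.} This follows from $\inf_C\alpha_{\min}=0$: since $\alpha_{\min}$ only takes the values $0$ and $+\infty$, the infimum being $0$ forces $\alpha_{\min}$ to vanish somewhere.
\item \emph{Convexity.} The map $P\mapsto\eta_P=\mu_S-\mu_S(\Omega)\delta_{\omega_0}$ is affine, since $\mu_S$ depends linearly on $P$. Also $C_\rho$ is convex, being an intersection of half-spaces with the convex set $C$. Convex combinations of elements of $\mathcal P_S$ stay in $\mathcal P_S$, because the integrability of $1/S$ and the condition $P(\{\omega_0\})=0$ are preserved.
\end{itemize}

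\emph{Converse direction.} Given a nonempty convex $\mathcal P_S^\rho\subseteq\mathcal P_S$, define $\alpha_S$ as its indicator on $\mathcal P_S$. This function is proper, convex, takes values in $[0,+\infty]$, and has $\inf\alpha_S=0$. Hence, via $\alpha(\eta_P):=\alpha_S(P)$, it is an admissible penalty in \Cref{thm:convex-dual}. Convexity of $\alpha$ on $C$ follows because the correspondence $P\mapsto\eta_P$ is an affine bijection. The ($\Leftarrow$) part of the theorem then shows that $\rho$ is a convex risk measure; finiteness is also obtained there via $|X|\le c_XS$. Positive homogeneity is immediate, since for $\lambda\ge0$ each term $-\mathbb E_P[\lambda X/S]$ scales by $\lambda$, and $\lambda=0$ gives $\rho(0)=0$ because the index set is nonempty.

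\emph{Main obstacle.} The only delicate point is checking convexity of the transported sets and penalties under the correspondence $C\leftrightarrow\mathcal P_S$. In particular, one must note that $\eta_P$ depends affinely on $P$, while the inverse map $\eta\mapsto S\,d\eta^+$ is affine on $C$ because $\eta^+=\eta|_{\{\omega_0\}^c}$ there. Everything else is a direct specialization of \Cref{thm:convex-dual}.
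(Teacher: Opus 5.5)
Your proposal is correct and follows essentially the same route as the paper. The cone property of $A_\rho$ forces the minimal penalty into $\{0,+\infty\}$, nonemptiness comes from $\inf\alpha=0$, and the converse uses the indicator penalty together with \Cref{thm:convex-dual} and sublinearity; the only cosmetic difference is that you get convexity of $\mathcal P_S^\rho$ from the half-space description of $C_\rho$ and the affine bijection $P\mapsto\eta_P$ (making explicit the transfer of the penalty back to $C$ that the paper leaves implicit), whereas the paper reads it off as a sublevel set of the convex function $\alpha_S$.
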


\begin{proof}
Assume first that \(\rho\) is coherent. By \cref{thm:convex-dual},
\(\rho\) admits the representation
\[
\rho(X)
=
\sup_{P\in\mathcal{P}_S}
\left\{
-\mathbb E_P\!\left[\frac{X}{S}\right]
-\alpha_S(P)
\right\},
\]
where \(\alpha_S\) is the minimal penalty. Since \(\rho\) is coherent, its
acceptance set \(A_\rho\) is a cone. Hence, for every \(P\in\mathcal{P}_S\),
\[
\alpha_S(P)
=
\sup_{X\in A_\rho}
\left\{
-\mathbb E_P\!\left[\frac{X}{S}\right]
\right\}
\in\{0,+\infty\}.
\]

Define
\[
\mathcal{P}_S^\rho
:=
\{P\in\mathcal{P}_S:\alpha_S(P)=0\}.
\]
Since \(\alpha_S\ge 0\) and \(\inf_{\mathcal{P}_S}\alpha_S=0\), the set
\(\mathcal{P}_S^\rho\) is nonempty. Moreover, it is convex, because
\(\alpha_S\) is convex and
\(\mathcal{P}_S^\rho=\{P\in\mathcal{P}_S:\alpha_S(P)\le 0\}\).
Therefore the previous representation reduces to
\[
\rho(X)
=
\sup_{P\in\mathcal{P}_S^\rho}
\left\{
-\mathbb E_P\!\left[\frac{X}{S}\right]
\right\}.
\]

Conversely, suppose that such a nonempty convex set
\(\mathcal{P}_S^\rho\subseteq\mathcal{P}_S\) exists. Define
\[
\alpha_S(P):=
\begin{cases}
0, & P\in\mathcal{P}_S^\rho,\\
+\infty, & P\in\mathcal{P}_S\setminus\mathcal{P}_S^\rho.
\end{cases}
\]
Since \(\mathcal{P}_S^\rho\) is nonempty and convex, \(\alpha_S\) is a proper
convex penalty with \(\inf_{\mathcal{P}_S}\alpha_S=0\). Hence, by
\cref{thm:convex-dual}, \(\rho\) is a convex risk measure. Moreover,
as
\[
\rho(X)
=
\sup_{P\in\mathcal{P}_S^\rho}
\left\{
-\mathbb E_P\!\left[\frac{X}{S}\right]
\right\},
\]
\(\rho\) is the supremum of linear maps. Therefore \(\rho\) is positively
homogeneous and subadditive. Hence, \(\rho\) is coherent.
\end{proof}

The next theorem completes the continuity analysis by establishing the equivalence between the Fatou property and lower semicontinuity with respect to both $\sigma(\operatorname{Lip}_0(\Omega),\mathcal{F}(\Omega))$ and $\sigma(\operatorname{Lip}_0(\Omega),\Signed (\Omega))$. The additional condition involving $\min(1,nS)$ has a natural financial interpretation: it describes benchmark-deviation losses whose exposure increases near the reference state but is capped at one, so that the associated loss remains uniformly bounded in magnitude. As $n$ increases, these payoffs converge pointwise to ${1}_{\Omega\setminus\{\omega_0\}}$, although they are not uniformly bounded in the Lipschitz norm because their sensitivity near the benchmark becomes progressively steeper. Requiring the capital assigned to $-\min(1,nS)$ to remain uniformly bounded prevents the risk measure from assigning arbitrarily large capital solely to increasingly sharp, but bounded, deviations from the benchmark. Under this condition, the theorem shows that the three continuity requirements provide equivalent formulations of the robustness of the risk measure.

\begin{theorem}\label{teo conti} Fix $S \in \mathcal{S}$.
    Let $\rho: \Lip \rightarrow \R$ be a convex, monotone and S-additive functional such that $\sup_{n\in \mathbb{N}} \rho( -\min(1,n S)) < \infty$. Then the following are equivalent:

    \begin{enumerate}[nosep]
    \item\label{teo conti fatou} $\rho$ satisfies  the following Fatou property:
    For any sequence $(X_n)\subset \Lip$ with $\sup_{n}\|X_n\|_{\operatorname{Lip}}<\infty$ such that
    $X_n(\omega)\to X(\omega)$ for all $\omega \in \Omega$, one has  
    \(\rho(X)\le \liminf_{n\to\infty}\rho(X_n).\) 
        \item\label{teo conti Flsc}  $\rho$ is $\sigma(\operatorname{Lip}_0(\Omega), \mathcal{F}(\Omega))$-lower semi-continuous 
        
        \item\label{teo conti Mlsc}  $\rho$ is $\sigma(\operatorname{Lip}_0(\Omega), \Signed (\Omega))$-lower semi-continuous
        
    \end{enumerate}

\end{theorem}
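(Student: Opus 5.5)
The plan is to prove the cycle \ref{teo conti Mlsc}$\Rightarrow$\ref{teo conti fatou}$\Rightarrow$\ref{teo conti Flsc}$\Rightarrow$\ref{teo conti Mlsc}. Throughout, lower semicontinuity means that every sublevel set $\{\rho\le c\}$ is closed in the relevant topology; these sets are convex because $\rho$ is convex.

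For \ref{teo conti Mlsc}$\Rightarrow$\ref{teo conti fatou}, take $(X_n)$ with $L:=\sup_n\|X_n\|_{\operatorname{Lip}}<\infty$ and $X_n\to X$ pointwise. Then $|X_n(\omega)|\le L\,d(\omega,\omega_0)$. For every $\eta\in\Signed$ the function $d(\cdot,\omega_0)$ is $|\eta|$-integrable, so dominated convergence gives $\langle X_n,\eta\rangle\to\langle X,\eta\rangle$. Hence $X_n\to X$ in $\sigma(\Lip,\Signed)$, and lower semicontinuity in that topology yields $\rho(X)\le\liminf_n\rho(X_n)$. Note also that the pointwise limit $X$ is automatically in $\Lip$ with $\|X\|_{\operatorname{Lip}}\le L$.

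For \ref{teo conti Flsc}$\Rightarrow$\ref{teo conti Mlsc}, observe that every molecule $\delta_\omega-\delta_{\omega_0}$ is a finitely supported element of $\Signed$. By the isometric embedding of $\Signed$ into $\Lip^*$ recalled in Section~2, the norm closure $\mathcal F(\Omega)$ of their span sits inside the closure of $\Signed$. Rather than relying on closedness, I will argue with nets directly. Suppose $X_\alpha\to X$ in $\sigma(\Lip,\Signed)$. Testing against molecules gives $X_\alpha(\omega)\to X(\omega)$ for every $\omega$, which is exactly convergence in $\sigma(\Lip,\operatorname{span}\{\delta_\omega-\delta_{\omega_0}\})$. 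The difficulty is that this is weaker than $\sigma(\Lip,\mathcal F(\Omega))$ for unbounded nets. To avoid it, I will route through Krein--Šmulian. By \ref{teo conti Flsc}, each sublevel set is $\sigma(\Lip,\mathcal F)$-closed, and a convex $\sigma(\Lip,\mathcal F)$-closed set is also closed for any finer locally convex topology compatible with the pairing. Alternatively, and more simply, I will prove \ref{teo conti fatou}$\Rightarrow$\ref{teo conti Mlsc} directly by the same Krein--Šmulian argument described below, applied to the sublevel sets. For this it suffices that $\sigma(\Lip,\Signed)$-closedness of a convex set follows from $\sigma(\Lip,\mathcal F)$-closedness. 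That holds because $\sigma(\Lip,\mathcal F)$ is coarser: every $\phi\in\mathcal F$ is a norm limit of molecule combinations, and on a $\sigma(\Lip,\Signed)$-convergent net that is not norm-bounded one cannot pass to the limit. I therefore expect to use the Hahn--Banach separation with $\Lip=\mathcal F^*$: a convex $\sigma(\Lip,\mathcal F)$-closed set is an intersection of half-spaces $\{\langle X,\phi\rangle\ge a\}$ with $\phi\in\mathcal F$. Each such half-space is $\sigma(\Lip,\Signed)$-closed once one knows $\mathcal F\subseteq\Signed$ under the canonical identification, which I will check using the separability of $\Omega$ and the representation of elements of $\mathcal F(\Omega)$ by series of molecules.

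For \ref{teo conti fatou}$\Rightarrow$\ref{teo conti Flsc}, which I expect to be the main step, fix $c$ and the convex set $K_c=\{\rho\le c\}$. By the Krein--Šmulian theorem for the dual $\Lip=\mathcal F(\Omega)^*$, $K_c$ is weak$^*$-closed as soon as $K_c\cap rB$ is weak$^*$-closed for every $r>0$, where $B$ is the closed unit ball of $\Lip$. Since $\Omega$ is separable, $\mathcal F(\Omega)$ is separable, so $rB$ with the weak$^*$ topology is compact and metrizable. Closedness can therefore be tested along sequences. On $rB$, weak$^*$ convergence coincides with pointwise convergence: the span of the molecules is norm-dense in $\mathcal F$, and the net is uniformly bounded, so a standard $\varepsilon/3$ argument applies. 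Hence a sequence $X_n\in K_c\cap rB$ converging weak$^*$ to $X$ converges pointwise with uniformly bounded Lipschitz norms. The Fatou property then gives $\rho(X)\le\liminf\rho(X_n)\le c$, so $X\in K_c\cap rB$. The hypothesis $\sup_n\rho(-\min(1,nS))<\infty$ is what I would invoke if the inclusion $\mathcal F\subseteq\Signed$, or the passage between the two topologies, requires controlling $\rho$ along approximations of $\mathbf 1_{\Omega\setminus\{\omega_0\}}$. Under the plan above it may turn out to be needed only to guarantee finiteness and properness, so I would check at the end exactly where, if anywhere, it enters.
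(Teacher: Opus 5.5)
Your implications (3)$\Rightarrow$(1) (dominated convergence against $|\eta|$) and (1)$\Rightarrow$(2) (Krein--\v{S}mulian, weak$^*$-metrizable balls, pointwise convergence via molecules, then Fatou) are exactly the paper's arguments. \textbf{The gap is (2)$\Rightarrow$(3).} That step rests on $\mathcal F(\Omega)\subseteq\Signed$ and on $\sigma(\Lip,\mathcal F(\Omega))$ being the coarser topology, and both are backwards.

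For Polish $\Omega$, take any $\eta\in\Signed$. It is weak$^*$-sequentially continuous on bounded sets by dominated convergence, and those sets are weak$^*$-metrizable, so $\eta\in\mathcal F(\Omega)$ by Banach--Dieudonn\'e. Hence $\sigma(\Lip,\Signed)$ is the coarser topology. So (3)$\Rightarrow$(2) is the trivial direction and (2)$\Rightarrow$(3) is the substantive one.

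The inclusion you need fails. It is true that $\mathcal F(\Omega)$ lies in the Kantorovich--Rubinstein closure of $\Signed$, but $\Signed$ is not closed in that norm. A norm-convergent series $\sum_n a_n(\delta_{x_n}-\delta_{y_n})/d(x_n,y_n)$ can have infinite total variation when $d(x_n,y_n)\to0$.

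Here is a concrete counterexample. Take $\Omega=[0,1]$, $\omega_0=0$, $S(t)=t$, and $\rho(X)=-\tfrac12\int_0^1X(t)\,t^{-3/2}\,dt$, which is finite because $|X(t)|\le\|X\|_{\operatorname{Lip}}\,t$.
\begin{itemize}
\item $\rho$ is linear, monotone and $S$-additive.
\item $\rho$ is weak$^*$-continuous and has the Fatou property, by dominated convergence with majorant $R\,t^{-1/2}$. So (1) and (2) hold.
\item Suppose $\rho$ were $\sigma(\Lip,\Signed)$-lower semicontinuous. Being linear, it would then be continuous, hence $X\mapsto\int X\,d\eta$ for some $\eta\in\Signed$. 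Testing with Lipschitz bumps vanishing near $0$ forces $\eta=-\tfrac12t^{-3/2}\,dt$ on $(0,1]$, which has infinite mass. So (3) fails.
\end{itemize}
Consistently with the theorem, $\rho(-\min(1,nS))\ge\sqrt n-1\to\infty$. Your plan never uses $\sup_n\rho(-\min(1,nS))<\infty$, so it would ``prove'' this false implication. That hypothesis is exactly what (2)$\Rightarrow$(3) requires; it is not a finiteness or properness device. Mackey--Arens-type invariance of closed convex sets does not help either, because $\sigma(\Lip,\Signed)$ is not a topology of the dual pair $(\Lip,\mathcal F(\Omega))$.

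\textbf{The missing idea} is to show that only those elements of $\mathcal F(\Omega)$ with finite penalty matter, and that these lie in $\Signed$. The paper's route is as follows.
\begin{enumerate}
\item Apply Fenchel--Moreau for the pair $(\Lip,\mathcal F(\Omega))$ and the argument of \cref{thm:convex-dual}. This gives $\rho(X)=\sup_\eta\{-\langle X,\eta\rangle-\alpha(\eta)\}$ over positive $\eta\in\mathcal F(\Omega)$ with $\langle S,\eta\rangle=1$.
\item Each such $\eta$ with $\alpha(\eta)<\infty$ is represented, by the Aliaga--Perneck\'a integral representation of positive elements of free spaces, by a positive, possibly infinite, Borel measure $\mu$ with $\mu(\{\omega_0\})=0$.
\item The dual representation gives $\int\min(1,nS)\,d\mu\le\sup_k\rho(-\min(1,kS))+\alpha(\eta)$. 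Monotone convergence then yields $\mu(\Omega)<\infty$; this is where the hypothesis enters.
\item The order-unit property gives $\int d(\cdot,\omega_0)\,d\mu\le c\int S\,d\mu=c$.
\item Hence $\mu-\mu(\Omega)\delta_{\omega_0}\in\Signed$ represents $\eta$, and $\rho$ is a supremum of $\sigma(\Lip,\Signed)$-continuous affine maps.
\end{enumerate}
Your Hahn--Banach intuition survives only in this weaker form: not every functional in $\mathcal F(\Omega)$ lies in $\Signed$, but those carrying finite penalty do.
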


\begin{proof}

    \cref{teo conti fatou} $\Rightarrow $ \cref{teo conti Flsc}. 
Since $\Omega$ is Polish, it is separable, and therefore $\mathcal{F}(\Omega)$ is
separable. Indeed, if $D\subseteq\Omega$ is a countable dense subset
containing $\omega_0$, then the linear span of
$\{\delta_\omega-\delta_{\omega_0}:\omega\in D\}$ is dense in
$\mathcal{F}(\Omega)$. Consequently, every norm-closed ball of
$\operatorname{Lip}_0(\Omega)=\mathcal{F}(\Omega)^*$ is compact and metrizable for
$\sigma(\operatorname{Lip}_0(\Omega),\mathcal{F}(\Omega))$.

By $S$-additivity, it is enough to prove that $A_\rho$ is
$\sigma(\operatorname{Lip}_0(\Omega),\mathcal{F}(\Omega))$-closed. Since $A_\rho$ is
convex, by the Krein--Šmulian theorem it suffices to show that its
intersection with every norm-closed ball is weak$^\ast$ closed (in the $\sigma(\operatorname{Lip}_0(\Omega),\mathcal{F}(\Omega))$ sense).
Fix $r>0$, and let $(X_n)$ be a sequence in
\[
A_\rho\cap\{Y\in\operatorname{Lip}_0(\Omega):\|Y\|_{\operatorname{Lip}}\le r\}
\]
such that
$X_n\to X$ in $\sigma(\operatorname{Lip}_0(\Omega),\mathcal{F}(\Omega))$.
Since the closed ball is weak$^\ast$ closed, we also have
$\|X\|_{\operatorname{Lip}}\le r$.
For every $\omega\in\Omega$,
$\delta_\omega-\delta_{\omega_0}\in \mathcal{F}(\Omega)$ and, since every element
of $\operatorname{Lip}_0(\Omega)$ vanishes at $\omega_0$,
\[
X_n(\omega)
=
\langle X_n,\delta_\omega-\delta_{\omega_0}\rangle
\longrightarrow
\langle X,\delta_\omega-\delta_{\omega_0}\rangle
=
X(\omega).
\]
Moreover, $\sup_n\|X_n\|_{\operatorname{Lip}}\le r$. Hence, by the Fatou
property,
\[
\rho(X)\le\liminf_{n\to\infty}\rho(X_n)\le0.
\]
Thus $X\in A_\rho$. Therefore, the intersection of $A_\rho$ with every
norm-closed ball is weak$^\ast$ sequentially closed. Since these balls
are weak$^\ast$ metrizable, these intersections are weak$^\ast$ closed.
The Krein--Šmulian theorem now implies that $A_\rho$ is
$\sigma(\operatorname{Lip}_0(\Omega),\mathcal{F}(\Omega))$-closed.

Finally, for every $c\in\mathbb R$, $S$-additivity gives
\[
\{X\in\operatorname{Lip}_0(\Omega):\rho(X)\le c\}=A_\rho-cS.
\]
Thus every sublevel set of $\rho$ is
$\sigma(\operatorname{Lip}_0(\Omega),\mathcal{F}(\Omega))$-closed, and consequently
$\rho$ is $\sigma(\operatorname{Lip}_0(\Omega),\mathcal{F}(\Omega))$-lower
semicontinuous.

\cref{teo conti Flsc} $\Rightarrow$ \cref{teo conti Mlsc}.
Suppose that $\rho$ is
$\sigma(\operatorname{Lip}_0(\Omega),\mathcal{F}(\Omega))$-lower semicontinuous. By the
Fenchel--Moreau theorem applied to the dual pair
$(\operatorname{Lip}_0(\Omega),\mathcal{F}(\Omega))$, and arguing as in
\cref{thm:convex-dual}, we obtain
\[
\rho(X)
=
\sup_{\eta\in C_F}
\bigl\{
-\langle X,\eta\rangle-\alpha(\eta)
\bigr\},
\]
where
\[
C_F
=
\left\{
\eta\in \mathcal{F}(\Omega):
\langle X,\eta\rangle\ge0
\text{ for every }X\in\operatorname{Lip}_0^+(\Omega),
\ \langle S,\eta\rangle=1
\right\},
\]
and $\alpha(\eta)=\sup_{X\in A_\rho}\langle-X,\eta\rangle$.

Let $\eta\in C_F$ satisfy $\alpha(\eta)<\infty$. Since $\eta$ is a
positive element of $\mathcal{F}(\Omega)$ and $\Omega$ is Polish,
\citet[Corollary~5.8]{aliaga-pernecka-integral} yields a positive almost
Radon measure $\mu$ on $\Omega$ such that
$\mu(\{\omega_0\})=0$ and
$\langle X,\eta\rangle=\int_\Omega X\,d\mu$ for every
$X\in\operatorname{Lip}_0(\Omega)$. In particular, $\mu$ is
$\sigma$-finite. Moreover,
$\int_\Omega S\,d\mu=\langle S,\eta\rangle=1$.

Set $Y_n=\min(1,nS)$. Since $S\in\operatorname{Lip}_0^+(\Omega)$, we have
$Y_n\in\operatorname{Lip}_0^+(\Omega)$, and $Y_n\uparrow1$ on
$\Omega\setminus\{\omega_0\}$. From the dual representation,
\[
\int_\Omega Y_n\,d\mu-\alpha(\eta)
=
\langle Y_n,\eta\rangle-\alpha(\eta)
\le \rho(-Y_n).
\]
Therefore,
\[
\int_\Omega Y_n\,d\mu
\le
\sup_{k\in\mathbb N}\rho(-Y_k)+\alpha(\eta)
<\infty.
\]
By the monotone convergence theorem,
$\mu(\Omega)=\lim_n\int_\Omega Y_n\,d\mu<\infty$.

Since $S$ is an order unit, applied to
$d(\cdot,\omega_0)\in\operatorname{Lip}_0^+(\Omega)$ there exists $c>0$ such
that $d(\omega,\omega_0)\le cS(\omega)$ for every $\omega\in\Omega$.
Consequently,
$\int_\Omega d(\omega,\omega_0)\,d\mu\le c\int_\Omega S\,d\mu=c$.

It follows that
$\mu-\mu(\Omega)\delta_{\omega_0}\in \Signed (\Omega)$. Since every
$X\in\operatorname{Lip}_0(\Omega)$ vanishes at $\omega_0$, this measure
represents the same functional as $\eta$:
\[
\int_\Omega X\,d\bigl(\mu-\mu(\Omega)\delta_{\omega_0}\bigr)
=
\int_\Omega X\,d\mu
=
\langle X,\eta\rangle.
\]
Hence every element of $C_F$ with finite penalty is represented by an
element of $\Signed (\Omega)$. Elements with infinite penalty do not
contribute to the supremum, so the dual representation of $\rho$ can be
reduced to $\Signed (\Omega)$.

Thus $\rho$ is the supremum of affine
$\sigma(\operatorname{Lip}_0(\Omega),\Signed (\Omega))$-continuous functions.
Therefore, $\rho$ is
$\sigma(\operatorname{Lip}_0(\Omega),\Signed (\Omega))$-lower semicontinuous.

\cref{teo conti Mlsc} $\Rightarrow$ \cref{teo conti fatou}.
Let $(X_n)\subset \operatorname{Lip}_0(\Omega)$ satisfy
$\sup_n\|X_n\|_{\operatorname{Lip}}<\infty$ and
$X_n(\omega)\to X(\omega)$ for every $\omega\in\Omega$.
Choose $R>0$ such that $\|X_n\|_{\operatorname{Lip}}\le R$ for every
$n$. Since $X_n(\omega_0)=0$, we have
$|X_n(\omega)|\le R\,d(\omega,\omega_0)$ for every $\omega\in\Omega$.
Moreover, $X(\omega_0)=0$ and, for every $\omega,\bar\omega\in\Omega$,
\[
|X(\omega)-X(\bar\omega)|
=
\lim_{n\to\infty}|X_n(\omega)-X_n(\bar\omega)|
\le R\,d(\omega,\bar\omega).
\]
Hence $X\in\operatorname{Lip}_0(\Omega)$ and
$|X(\omega)|\le R\,d(\omega,\omega_0)$.

Let $\eta\in \Signed (\Omega)$. Since
$|X_n(\omega)-X(\omega)|\le 2R\,d(\omega,\omega_0)$ and
$d(\cdot,\omega_0)$ is integrable with respect to $|\eta|$, the dominated
convergence theorem yields
$\int_\Omega |X_n-X|\,d|\eta|\to0$. Consequently,
$|\langle X_n-X,\eta\rangle|
\le\int_\Omega |X_n-X|\,d|\eta|\to0$.

Since this holds for every $\eta\in \Signed (\Omega)$, we conclude that
$X_n\to X$ in
$\sigma(\operatorname{Lip}_0(\Omega),\Signed (\Omega))$. Therefore, the
$\sigma(\operatorname{Lip}_0(\Omega),\Signed (\Omega))$-lower semicontinuity of
$\rho$ implies $\rho(X)\le\liminf_{n\to\infty}\rho(X_n).$
Thus $\rho$ satisfies the Fatou property.

\end{proof}

\begin{remark}
The condition  $\sup_{n\in \mathbb{N}} \rho( -\min(1,n S)) < \infty$ in the \cref{teo conti} is implied by monotonicity and S-additivity if $\omega_0$ is an isolated point of $\Omega$. 

If $\omega_0$ is isolated, there exists $a>0$ such that
$d(\omega,\omega_0)\ge a$ for every $\omega\ne\omega_0$. Since $S$ is
an order unit, there exists $c>0$ such that
$d(\omega,\omega_0)\le cS(\omega)$. Hence
$\min(1,nS)\le(c/a)S$ for every $n$. By monotonicity and
$S$-additivity,
$\rho\bigl(-\min(1,nS)\bigr)
\le
\rho\bigl(-(c/a)S\bigr)
=
\rho(0)+c/a.
$ Therefore,
$\sup_{n\in\mathbb N}\rho\bigl(-\min(1,nS)\bigr)<\infty$.
\end{remark}

\section{Examples}

\begin{example}[Canonical worst-case functional]

Let us illustrate our framework with examples of possible applications. We begin by presenting a simple but important primal object, a worst-case risk measure over the dual slice. 

Let $(\Omega,d)$ be a Polish metric space with basepoint $\omega_0$ and benchmark 
deviation instrument $S \in \mathcal{S}$. Define
\[
\mathcal{P}_S := \left\{ P \in \mathcal{P}(\Omega) : P(\{\omega_0\}) = 0, 
\int_\Omega \frac{1}{S} \, dP < \infty \right\}.
\]
For $X \in \operatorname{Lip}_0(\Omega)$, consider the worst-case benchmark-normalized 
functional
\[
\rho_0(X) := \sup_{P \in \mathcal{P}_S} \mathbb{E}_P\!\left[-\frac{X}{S}\right]
= \sup_{P \in \mathcal{P}_S} \int_\Omega -\frac{X(\omega)}{S(\omega)} \, dP(\omega).
\]
Writing $f(\omega) := -X(\omega)/S(\omega)$, we establish that $\rho_0(X) = 
\sup_{\omega \in \Omega\setminus \{\omega_0\}} f(\omega)$ via a two-sided argument. For the lower bound, 
since $S(\omega) > 0$ on $\Omega \setminus \{\omega_0\}$, the Dirac measure 
$\delta_\omega \in \mathcal{P}_S$ for every $\omega \neq \omega_0$, and
\[
\mathbb{E}_{\delta_\omega}\!\left[-\frac{X}{S}\right] = -\frac{X(\omega)}{S(\omega)} 
= f(\omega).
\]
Taking the supremum over $\omega$ yields $\rho_0(X) \geq \sup_{\omega} f(\omega)$. 
For the upper bound, since $f(\omega) \leq \sup_{\bar\omega} f(\bar\omega)$ pointwise, 
integrating against any $P \in \mathcal{P}_S$ gives
\[
\mathbb{E}_P\!\left[-\frac{X}{S}\right] = \int f \, dP \leq \int \sup_{\bar\omega} 
f(\bar\omega) \, dP = \sup_{\bar\omega} f(\bar\omega),
\]
where the last equality uses $P(\Omega) = 1$. Taking the supremum over $P \in 
\mathcal{P}_S$ yields $\rho_0(X) \leq \sup_{\omega} f(\omega)$. Combining both 
inequalities,
\[
\rho_0(X) = \sup_{\omega \in \Omega\setminus\{\omega_0\}} \left(-\frac{X(\omega)}{S(\omega)}\right) 
= -\inf_{\omega \in \Omega\setminus\{\omega_0\}}\frac{X(\omega)}{S(\omega)}.
\]
This is a coherent risk measure in the sense of Corollary~\ref{cor:coherent} with $\mathcal{P}_S^\rho 
= \mathcal{P}_S$ and vanishing penalty $\alpha_S \equiv 0$: the dual formulation 
collapses to pointwise worst-case evaluation of the benchmark-normalized payoff.

\end{example}

\begin{example}[Benchmark-adapted optimized certainty equivalent]
\label{ex:oce}

Let \(S\in\mathcal S\), and let \(g:\Lip\to\R\) satisfy the following conditions:
\begin{enumerate}[noitemsep,topsep=0pt]
    \item \(g\) is increasing with respect to the pointwise order;
    \item \(g\) is convex and \(\sigma(\Lip,\Signed)\)-lower semicontinuous;
    \item the function \(h:\R\to\R\), defined by
    \(h(y):=-y+g(yS)\), satisfies
    \(\inf_{y\in\R}h(y)=0\) and \(h(y)\to+\infty\) as \(|y|\to\infty\);
    \item \(\sup_{n\in\mathbb N}g(\min(1,nS))<\infty\).
\end{enumerate}

Define
\[
\rho_g(X):=\inf_{y\in\R}\{-y+g(yS-X)\},
\qquad X\in\Lip.
\]

We first show that \(\rho_g\) is finite-valued. Since \(S\) is an order unit,
for every \(X\in\Lip\) there exists \(c_X>0\) such that
\(|X|\leq c_XS\). Hence \(yS-X\geq(y-c_X)S\), and monotonicity of \(g\)
yields
\[
-y+g(yS-X)
\geq
-y+g((y-c_X)S)
=
h(y-c_X)-c_X
\geq-c_X.
\]
Taking \(y=0\) also gives \(\rho_g(X)\leq g(-X)<\infty\). Therefore,
\(-c_X\leq\rho_g(X)\leq g(-X)\).

The same inequality shows that
\(y\mapsto-y+g(yS-X)\) is coercive. It is also lower semicontinuous,
since \(y\mapsto yS-X\) is \(\sigma(\Lip,\Signed)\)-continuous.
Thus, the infimum defining \(\rho_g(X)\) is attained. Moreover, the
normalization of \(h\) gives \(\rho_g(0)=0\).

For \(m\in\R\), the change of variables \(z=y-m\) gives
\[
\rho_g(X+mS)
=
\inf_{z\in\R}\{-(z+m)+g(zS-X)\}
=
\rho_g(X)-m,
\]
so \(\rho_g\) is \(S\)-additive. If \(X\geq Y\), then
\(yS-X\leq yS-Y\), and monotonicity of \(g\) implies
\(\rho_g(X)\leq\rho_g(Y)\).

To prove convexity, let \(X,Y\in\Lip\), \(\lambda\in[0,1]\), and
\(y,z\in\R\). Using \(\lambda y+(1-\lambda)z\) as an admissible value
in the definition of \(\rho_g(\lambda X+(1-\lambda)Y)\), convexity of
\(g\) yields
\begin{align*}
\rho_g(\lambda X+(1-\lambda)Y)
&\leq
-\lambda y-(1-\lambda)z \\
&\quad
+g\bigl(\lambda(yS-X)+(1-\lambda)(zS-Y)\bigr)\\
&\leq
\lambda[-y+g(yS-X)]
+(1-\lambda)[-z+g(zS-Y)].
\end{align*}
Taking the infimum over \(y\) and \(z\) proves that \(\rho_g\) is convex.

We next verify the Fatou property. Let \((X_n)\subset\Lip\) satisfy
\(\sup_n\|X_n\|_{\operatorname{Lip}}\leq R\) and
\(X_n(\omega)\to X(\omega)\) for every \(\omega\in\Omega\).
Since \(S\in\mathcal S\), there exists \(\kappa>0\) such that
\(S\geq\kappa d(\cdot,\omega_0)\). Consequently,
\[
|X_n|\leq\frac{R}{\kappa}S
\qquad\text{and}\qquad
|X|\leq\frac{R}{\kappa}S.
\]
Set \(c:=R/\kappa\). The previous lower bound implies
\(\rho_g(X_n)\geq-c\) for every \(n\).

If \(\liminf_n\rho_g(X_n)=+\infty\), the conclusion is immediate.
Otherwise, pass to a subsequence, still denoted by \((X_n)\), such that
\(\rho_g(X_n)\to\liminf_k\rho_g(X_k)<\infty\), and let \(y_n\) be a
minimizer for \(\rho_g(X_n)\). Then
\[
h(y_n-c)-c
\leq
-y_n+g(y_nS-X_n)
=
\rho_g(X_n).
\]
Coercivity of \(h\) implies that \((y_n)\) is bounded. Passing to a
further subsequence, assume that \(y_n\to y\).

As observed in the proof of \cref{teo conti}, uniformly
Lipschitz-bounded pointwise convergence implies convergence in
\(\sigma(\Lip,\Signed)\). Hence
\(y_nS-X_n\to yS-X\) in \(\sigma(\Lip,\Signed)\). By the lower
semicontinuity of \(g\),
\[
\rho_g(X)
\leq
-y+g(yS-X)
\leq
\liminf_{n\to\infty}\{-y_n+g(y_nS-X_n)\}
=
\liminf_{n\to\infty}\rho_g(X_n).
\]
Thus, \(\rho_g\) satisfies the Fatou property.

Furthermore, taking \(y=0\) gives
\[
\rho_g(-\min(1,nS))
\leq
g(\min(1,nS)).
\]
Therefore,
\(\sup_n\rho_g(-\min(1,nS))<\infty\). By \cref{teo conti}, the Fatou
property is equivalent, under this condition, to
\(\sigma(\Lip,\Signed)\)-lower semicontinuity. Consequently,
\(\rho_g\) is a convex risk measure.

We now identify its dual penalty. Define the conjugate of \(g\), with
respect to the dual pair \((\Lip,\Signed)\), by
\[
g^*(\eta)
:=
\sup_{Z\in\Lip}\{\langle Z,\eta\rangle-g(Z)\},
\qquad \eta\in\Signed.
\]
The conjugate of \(\rho_g\), under the sign convention used for risk
measures, satisfies
\begin{align*}
\rho_g^*(\eta)
&=
\sup_{\substack{X\in\Lip\\y\in\R}}
\{\langle-X,\eta\rangle+y-g(yS-X)\}\\
&=
\sup_{\substack{Z\in\Lip\\y\in\R}}
\{\langle Z,\eta\rangle-g(Z)
+y(1-\langle S,\eta\rangle)\}\\
&=
\begin{cases}
g^*(\eta), & \langle S,\eta\rangle=1,\\
+\infty, & \langle S,\eta\rangle\neq1,
\end{cases}
\end{align*}
where the second equality follows from the substitution \(Z=yS-X\).

Monotonicity of \(g\) implies that \(g^*(\eta)<\infty\) only if
\(\langle Z,\eta\rangle\geq0\) for every \(Z\in\Lip^+\). Indeed, if
\(\langle Z,\eta\rangle<0\) for some \(Z\geq0\), then
\(g(-\lambda Z)\leq g(0)\), while
\[
\langle-\lambda Z,\eta\rangle-g(-\lambda Z)
\longrightarrow+\infty
\]
as \(\lambda\to\infty\).

It follows from \cref{thm:convex-dual} that
\[
\rho_g(X)
=
\sup_{\eta\in C}
\{-\langle X,\eta\rangle-g^*(\eta)\},
\]
where
\[
C
=
\left\{
\eta\in\Signed:
\langle Z,\eta\rangle\geq0
\text{ for every }Z\in\Lip^+,
\quad
\langle S,\eta\rangle=1
\right\}.
\]
Thus, the minimal penalty of \(\rho_g\) is the restriction of \(g^*\)
to \(C\). Equivalently, using the probability parametrization in
\cref{thm:convex-dual},
\[
\rho_g(X)
=
\sup_{P\in\mathcal{P}_S}
\left\{
-\mathbb E_P\left[\frac{X}{S}\right]
-g^*(\eta_P)
\right\},
\]
where
\[
\mu_S(A):=\int_A\frac{1}{S}\,dP,
\qquad
\eta_P:=\mu_S-\mu_S(\Omega)\delta_{\omega_0}.
\]

A concrete instance is obtained by fixing \(P_0\in\mathcal{P}_S\) such
that
\(\mathbb E_{P_0}[\exp(1/S)]<\infty\), and defining
\[
g(Z)
:=
\mathbb E_{P_0}
\left[
\exp\left(\frac{Z}{S}\right)-1
\right].
\]
Since \(S\) is an order unit, \(Z/S\) is bounded for every
\(Z\in\Lip\), so \(g\) is finite-valued, convex, and increasing. Let
\(\phi(q):=q\log q-q+1\) for \(q\geq0\), with \(0\log0:=0\). The
standard integral conjugacy formula for the exponential function gives
\[
g(Z)
=
\sup_{\substack{q:\Omega\to[0,\infty)\ \mathrm{Borel}\\
\mathbb E_{P_0}[\phi(q)]<\infty}}
\left\{
\mathbb E_{P_0}\!\left[q\frac{Z}{S}\right]
-\mathbb E_{P_0}[\phi(q)]
\right\}.
\]
For every such \(q\), Young's inequality yields
\(q/S\leq\phi(q)+e^{1/S}-1\) and
\(q\leq\phi(q)+e-1\). Hence
\(\mathbb E_{P_0}[q/S]<\infty\) and
\(\mathbb E_{P_0}[q]<\infty\). Define
\(d\mu_q=(q/S)\,dP_0\) and
\(\eta_q:=\mu_q-\mu_q(\Omega)\delta_{\omega_0}\). Since
\(S\geq\kappa d(\cdot,\omega_0)\) for some \(\kappa>0\),
\[
\int_\Omega d(\omega,\omega_0)\,d\mu_q
\leq
\frac{1}{\kappa}\mathbb E_{P_0}[q]
<\infty.
\]
Thus \(\eta_q\in\Signed\), and
\(\mathbb E_{P_0}[qZ/S]=\langle Z,\eta_q\rangle\). Therefore \(g\) is
the supremum of affine
\(\sigma(\Lip,\Signed)\)-continuous functions and is
\(\sigma(\Lip,\Signed)\)-lower semicontinuous.

In this case, \(g(yS)=e^y-1\), so
\(h(y)=e^y-1-y\), which is nonnegative, coercive, and has minimum zero
at \(y=0\). Moreover,
\[
g(\min(1,nS))
=
\mathbb E_{P_0}
\left[
\exp\left(\min\left(\frac1S,n\right)\right)-1
\right]
\leq
\mathbb E_{P_0}
\left[
\exp\left(\frac1S\right)-1
\right].
\]
Therefore, all the preceding assumptions are satisfied. Direct
minimization gives
\[
\rho_g(X)
=
\log
\mathbb E_{P_0}
\left[
\exp\left(-\frac{X}{S}\right)
\right].
\]
Hence, \(\rho_g\) is the entropic risk measure applied to the bounded
benchmark-normalized payoff \(X/S\). Its probability representation is 
\[
\rho_g(X)
=
\sup_{\substack{P\ll P_0\\H(P\mid P_0)<\infty}}
\left\{
-\mathbb E_P\left[\frac{X}{S}\right]
-H(P\mid P_0)
\right\},
\]
where \(H(P\mid P_0)\) denotes relative entropy. The exponential
integrability of \(1/S\) ensures that every \(P\ll P_0\) with finite
relative entropy satisfies \(\mathbb E_P[1/S]<\infty\), and therefore
belongs to \(\mathcal{P}_S\).

\end{example}

We now recover a distributionally robust risk measure defined by a Wasserstein ambiguity set. This is informed by \citet{embrechts2021robust}, for instance.
\begin{example}
\label{ex:wasserstein-robust}
Fix
$P_0\in\mathcal{P}_S\cap\mathcal P_1(\Omega)$
and $\delta>0$. Recall from \cref{thm:convex-dual} that $\mathcal{P}_S$ denotes the class of admissible probability measures associated with $S$.

Define the penalty function $\alpha_\delta^{S}:\mathcal{P}_S\to[0,+\infty]$ by
\[
\alpha_\delta^S(P)
:=
\begin{cases}
0, & P\in\mathcal{P}_S\cap\mathcal P_1(\Omega)
     \text{ and } W_1(P,P_0)\leq\delta,\\
+\infty, & \text{otherwise}.
\end{cases}
\]

The Wasserstein--robust discounted risk functional
$\rho_\delta^{S}:\operatorname{Lip}_0(\Omega)\to\mathbb R$ is then defined by
\[
\rho_\delta^{S}(X)
=
\sup_{P\in\mathcal{P}_S}
\left\{
\mathbb{E}_{P}\!\left[-\frac{X}{S}\right]
-
\alpha_\delta^{S}(P)
\right\}.
\]

This functional is of the canonical dual form established in 
\cref{thm:convex-dual}, with penalty given by the indicator of a Wasserstein--$1$ 
ambiguity set. Equivalently, since $\alpha_\delta^{S}$ is the indicator of 
$\mathcal U_\delta(P_0)$, we may rewrite the risk functional as
\[
\rho_\delta^{S}(X)
=
\sup_{P\in \mathcal U_\delta(P_0)}
\mathbb{E}_{P}\!\left[-\frac{X}{S}\right],
\]
where the ambiguity set
\[ 
\mathcal U_\delta(P_0)
:=
\left\{
P\in\mathcal{P}_S\cap\mathcal P_1(\Omega):
W_1(P,P_0)\leq\delta
\right\}.
\]
is the Wasserstein--$1$ ball of radius $\delta$ centered at $P_0$. Thus 
$\rho_\delta^{S}$ corresponds to a distributionally robust optimization criterion 
in which the decision maker evaluates discounted payoffs $X/S$ under all probability 
measures lying within a prescribed transportation distance of a reference measure 
$P_0$.

The acceptance set of $\rho_\delta^S$ admits a natural characterization in terms of 
the metric geometry of $\Omega$:
\[
A_{\rho_\delta^S}
= \left\{ X\in\operatorname{Lip}_0(\Omega) : 
\mathbb{E}_{P}\!\left[\frac{X}{S}\right] \geq 0 \ 
\text{for all } P\in\mathcal{U}_\delta(P_0) \right\}.
\]

A position $X$ is therefore acceptable if and only if its discounted expected value $\mathbb{E}_{P}[X/S]$ is non-negative under every probability measure within 
Wasserstein--$1$ distance $\delta$ of the reference $P_0$. The radius $\delta$ controls the degree of robustness: as $\delta \to 0$ the ambiguity set collapses to 
$\{P_0\}$ and $A_{\rho_\delta^S}$ reduces to the acceptance set of the single-prior discounted expected loss under $P_0$, while as $\delta\to\infty$ the set $A_{\rho_\delta^S}$ shrinks, reflecting increasingly stringent acceptability requirements as the decision maker entertains ever more distant stress scenarios.

\end{example}

\section*{Declarations}

\noindent\textbf{Funding.}
Marlon Moresco acknowledges financial support from the Fundação de Amparo à Pesquisa do Estado do Rio Grande do Sul (FAPERGS), grant no.~25/2551-0000969-3. Marcelo Brutti Righi acknowledges financial support from the Conselho Nacional de Desenvolvimento Científico e Tecnológico (CNPq), grant no.~302869/2024-7.

\noindent\textbf{Conflict of interest.}
The authors declare that they have no conflict of interest.

\noindent\textbf{Ethics approval.}
Not applicable.

\noindent\textbf{Consent to participate.}
Not applicable.

\noindent\textbf{Consent for publication.}
Not applicable.

\noindent\textbf{Data availability.}
No data were used in this study.

\bibliographystyle{apalike}
\bibliography{survey}

\end{document}

%% file: preamble.tex
\usepackage{amssymb,amsmath}
\usepackage{amsthm}
\usepackage{mathtools}
\usepackage{amsfonts}
\usepackage{MnSymbol}
\usepackage{hyperref}
\usepackage[onehalfspacing]{setspace}
\usepackage{geometry}
\usepackage{cite}
\usepackage[authoryear,round]{natbib}
\usepackage{lipsum}
\usepackage{enumerate}
\usepackage{enumitem}
\usepackage{multirow,array}
\usepackage{chngpage}
\usepackage{caption}
\usepackage{float}
\floatstyle{plaintop}
\restylefloat{table}
\usepackage{graphicx}
\usepackage{subfig}
\usepackage{pgfplots}
\usepackage{tikz}
\usetikzlibrary{
    patterns,
    decorations.pathreplacing,
    backgrounds
}
\usepgfplotslibrary{fillbetween}
\usepackage{pgfplotstable}
\usepackage[misc]{ifsym}
\usepackage[dvipsnames]{xcolor}
\usepackage{textcomp}

\pgfplotsset{compat=1.18}

\hypersetup{
    colorlinks=true,
    urlcolor=green,
    linkcolor=red,
    citecolor=blue
}

\usepackage[
    noabbrev,
    nameinlink,
    capitalize,
    sort
]{cleveref}

\usepackage{todonotes}

\newcommand{\R}{\mathbb{R}}

\DeclareMathOperator*{\Lip}{{Lip}_0}
\DeclareMathOperator*{\Signed}{\mathcal M_0^1}

\crefname{namedtheorem}{Theorem}{Theorems}
\Crefname{namedtheorem}{Theorem}{Theorems}

\theoremstyle{plain}

\newtheorem{theorem}{Theorem}[section]
\newtheorem{lemma}[theorem]{Lemma}
\newtheorem{proposition}[theorem]{Proposition}
\newtheorem{corollary}[theorem]{Corollary}

\newtheorem*{theorem*}{Theorem}
\newtheorem*{lemma*}{Lemma}
\newtheorem*{proposition*}{Proposition}
\newtheorem*{corollary*}{Corollary}
\newtheorem*{assumption*}{Assumption}

\theoremstyle{definition}

\newtheorem{definition}[theorem]{Definition}
\newtheorem{remark}[theorem]{Remark}
\newtheorem{example}[theorem]{Example}

\newtheorem*{definition*}{Definition}
\newtheorem*{remark*}{Remark}
\newtheorem*{example*}{Example}

\numberwithin{equation}{section}